\author{Gershon Wolansky}
\newcommand{\bH}{\mathbb{H}}
\newcommand{\R}{\mathbb R}
\newcommand{\Ll}{\mathbb L}
\newcommand{\eps}{\epsilon}
\newtheorem{theorem}{Theorem}
\newtheorem{lemma}{Lemma}[section]
\newtheorem{prop}{Proposition}
\newtheorem{cor}{Corollary}[section]
\newtheorem{ex}{Problem}[section]
\newtheorem{defi}{Definition}[section]
\newtheorem{acknowledgment*}{Acknowledgment}
\newcommand{\be}{\begin{equation}}
\newcommand{\ee}{\end{equation}}
\title{\Large \textbf{ Dual formulation of  constraint solutions  for the  multi-state Choquard equation} }
\author{\textsc{Gershon Wolansky }\footnote{Department of Mathematics, Technion, Israel Inst. of Technology}}
\begin{document}
	\maketitle
	\begin{abstract}
The Choquard equation is a partial differential equation that has gained significant interest and attention in recent decades. It is a nonlinear equation that combines elements of both the Laplace and Schrödinger operators, and it arises frequently in the study of numerous physical phenomena, from condensed matter physics to nonlinear optics.

In particular, the steady states of the Choquard equation were thoroughly  investigated  using a variational functional acting on the wave functions. 

 In this article, we introduce a dual formulation for the variational functional in terms of the potential indiced by the wave function, and use it to   explore the existence of  steady states of a multi-state  version the Choquard equation in critical and sub-critical cases. 
\end{abstract}
	\section{Introduction}
	\subsection{Background}\label{backgr}
	The Choquard equation 
	$$ -\Delta \phi +   \phi -\left(\int\frac{ |\phi(y)|^2}{|x-y|} dy\right)\phi(x)=0$$
	in $\R^3$  
	was originally proposed by  Ph. Choquard, as an approximation to Hartree-Fock theory for a one component plasma. Equation of similar types also   appear to be a prototype of the so-called nonlocal problems, which arise in many situations (see, e.g \cite{Wie}) and  as a model of self-gravitating matter \cite{Pen}. 
	
	A generalized version in $\R^n$ takes the form 
	\be\label{eq1new} -\Delta \phi +\phi=\left(I_\alpha * |\phi|^p\right) |\phi|^{p-2} \phi\ee
	 where %$$ I_\alpha(x)=\frac{\Gamma\left( \frac{n-\alpha}{2}\right)}{\Gamma\left(\frac{\alpha}{2}\right)\pi^{n/2} 2^\alpha}|x|^{\alpha-n}$$
	 \be\label{Aalpha}I_\alpha=A(\alpha) |x|^{\alpha-n} ; \ \ \ A(\alpha):= \frac{ \Gamma\left( \frac{n-\alpha}{2}\right)}{2^\alpha\pi^{n/2} \Gamma\left(\alpha/2)\right) }\ee
	is the Rietz potential, 
	$\alpha\in (0,n)$, 
	$p\in (1,\infty)$ was considered by many authors in the last decades, using its variational structure as a critical point of the functional
	\be\label{Epa}  E_{p,\alpha}(\phi)=\frac{1}{2}\int_{\R^n}\left( |\nabla u|^2 + |\phi|^2-\frac{1}{2p}  \left(I_\alpha * |\phi|^p\right)|\phi|^p\right)\ee
	on an appropriate space. In particular, existence of solutions the case $p=2$ (and for more general singular interaction kernels) was studied by E.H. Lieb, P.L Lions and G. Menzala \cite{mez, Lion, leib}. For existence, regularity and asymptotic behavior of solutions  in the general case 
	see \cite{moroz1, moroz2} and references therein. 
	
	The  non-linear Schrödinger equation associated with $E_{p,\alpha}$  takes the form
	\be\label{nonse0}-i\partial_t\psi -\Delta\psi -a (I_\alpha*|\psi|^p)|\psi|^{p-2}\psi=0 \ . \ee
		The number $a\in \R$ is the strength of interaction. 	The case $a>0$ corresponds to the {\em attractive, gravitation-like} dynamics, and is related to  Choquard's equation.  The case $a<0$ is the repulsive, electrostatic case and is related to  the Hartree system (see, e.g. \cite{wol}). In this paper we deal with the attractive case.
		\par
	Considering an eigenmode $\psi=e^{-i\lambda t}\phi$  we get that  $\phi$ satisfy the non-linear eigenvalue problems 
		\be\label{nonsr} -\Delta \phi -a\left(I_\alpha * |\phi|^p\right) |\phi|^{p-2} \phi-\lambda\phi =0\ee
	which can be reduced to (\ref{eq1new}) by a proper scaling\footnote{Note that $\lambda<0$ is an eigenvalue below the essential spectrum of $-\Delta$} . However, the solutions of the nonlinear equation (\ref{nonse0}) preserve the $\Ll^2$ norm, so it is natural to look for stationary solutions  (\ref{nonsr}) under  a prescribed $\Ll^2$ norm (say, $\|\phi\|_2=1$). 
		It is not difficult to see that, in general, one can find a scaling $\phi \mapsto \phi_\eps(x)=\eps^{-n/2}\phi(\eps/ x)$ which preserves the $\Ll^2$ norm and transform the strength of interaction in (\ref{nonsr})   into $a=1$, making this parameter   mathematically insignificant. 
	There is, however, an exceptional case  $\alpha=n(p-1)-2$. In that case the first two terms in (\ref{nonsr}) are transformed with equal coefficients under $\Ll^2$ preserving scaling, so the size of the interaction coefficient $a$ is mathematically significant in that case.

	In the  case $p=2$ and in the presence of a prescribed, confining  potential $W$, the $\Ll^2-$ constraint version of  (\ref{nonsr}) takes the form 
	\be\label{secsc} -\Delta \phi +W \phi -a \left(\int_{\R^n} \frac{|\phi(y)|^2}{|x-y|^{n-\alpha}} dy\right) \phi-\lambda \phi=0, \ \ \ \|\phi\|_2=1 \ . \ee
	A solution of (\ref{secsc}) is given by a minimizer   of the functional 
	\be\label{EWdef} E^W_a(\phi):=\frac{1}{2} \int_{\R^n} \left(|\nabla \phi|^2 + W|\phi|^2\right)dx - \frac{a}{4} \int_{\R^n}\int_{\R^n} \frac{|\phi(x)|^2|\phi(y)|^2}{|x-y|^{n-\alpha}}dxdy \ee
	restricted to the $\Ll^2$ unit ball $\|\phi\|_2=1$.

 In {\cite{conmin} the authors studied the equation (\ref{secsc}) in the exceptional case $\alpha=n-2$, 
		for $n\geq 3$, $a>0$  and $W$ a prescribed function satisfying $\lim_{x\rightarrow\infty} W(x)=\infty$. 
	In particular, they showed the existence of a critical strength $\bar a_c >0$, depending on $n$ but independent of $W$, such   that 	  $E^W_a$  is bounded from below on the sphere $\|\phi\|_2=1$   iff $a\leq \bar a_c$. 
	Moreover, a minimizer of $E_a^W$  exists if $a<\bar a_c$, and is a solution of (\ref{secsc}) (c.f. \cite{conmin}). 
It was also shown that 
	$a_c=\|\bar\phi\|_2$, where $\bar\phi$ is the unique, positive solution   (c.f. \cite{Ma}) of the equation   of
		\be\label{barphi} -\Delta \bar\phi - \left(\int_{\R^n} \frac{|\bar\phi(y)|^2}{|x-y|^2} dy\right) \bar\phi+\bar\phi=0  \ .  \ \ee

		The object of the present  paper  is two-fold. 
		
		The first object is to extend the $\Ll^2$-constraint Choquard equation (\ref{secsc})   into a  $k-$ state system
		\be\label{Srsys} -\Delta\phi_j +W\phi_j -a \left(\sum_{i=1}^k \beta_i \int_{\R^n}\frac{|\phi_i|^2(y)}{|x-y|^{n-\alpha}} dy\right) \phi_j -\lambda_j\phi_j =0\ \  \|\phi_j\|_2=1, \ \ ; \ \ j=1\ldots k\ee
		where $(\phi_1, \ldots \phi_k)$ constitutes an orthonormal $k-$sequence in $\Ll^2(\R^n)$ and    
		\be\label{normalbeta} \beta_j>0, \ \ \ \ \sum_1^k \beta_j=1\ \ee
		are  the  {\em probabilities  of occupation} of the states $j=1\ldots k$,
		
		 In Section \ref{secmf} we introduce the time dependent Heisenberg system which leads naturally to (\ref{Srsys}), while the steady state (\ref{Srsys}) and its constraint variational formulation  are  introduced in Section \ref{steadystate}. 
		
		The second object is to 
		 introduce a dual approach to the  $\Ll^2$ constraint Choquard problem  in the case $p=2$. For the case of single state $k=1$, the dual formulation of $E^W_a$ (\ref{EWdef}) for $\alpha=2$  on the constraint $\Ll^2$ sphere  takes the form of the functional $V\mapsto {\cal H}^{W,\alpha}_a(V)$
		$$ {\cal H}^{W,2}_a(V)= \frac{a}{2}\int_{\R^n} |\nabla V|^2 + \lambda_1(V)$$
		over the {\em unconstrained}  Beppo-Levi space $V\in \dot\bH_1(\R^n)$ (c.f. section  \ref{crush}). Here the functional $\lambda_1=\lambda_1(V)$ is the leading (minimal)  eigenvalue of the Schrödinger operator 
	$-\Delta +W-aV$  on $\R^n$.
		
	 The extension of this dual formulation to the $k-$system (\ref{Srsys}) for $\alpha\in (0,2]$ is introduced in (\ref{E=H}). In case $\alpha=2$  it takes the form 
		$$ {\cal H}^{W,2}_{\beta,a}(V)= \frac{a}{2}\int_{\R^n} |\nabla V|^2 + \sum_{j=1}^k\beta_j\lambda_j(V)$$
		where $\lambda_1(V)<\lambda_2(V)\leq \ldots \lambda_k(V)$ are the leading $k$ eigenvalues of the 
		Schrödinger operator,   
while $\beta_1>\beta_2>\ldots \beta_k>0$. 

		The main result of this paper is summarized below ( Section \ref{maintheorem}):

		Using the dual variational formulation we show the existence of a minimizer of ${\cal H}^{W,\alpha}_{\beta,a}$ corresponding to a solution of  (\ref{Srsys})  in $\R^n$ for any $a>0$ where $\alpha\in (0,2]$,  $3\leq n<2+\alpha$ . In the critical cases $\alpha=2, n=4$ and $\alpha=1, n=3$ we  show the existence of a critical interaction level  $a^{(n)}_c(\beta)$ for which there is a minimizer of ${\cal H}^{W,\alpha}_{\beta,a}$ if $a<a^{(n)}_c(\beta)$ corresponding to a solution of (\ref{Srsys}),
		 while ${\cal H}^{W,\alpha}_{\beta,a}$ is unbounded from below   for any $a>a^{(n)}_c(\beta)$ for $n=3,4$.

	%	Note that both cases coincide in the   4-dimensional space ( $2=\alpha=n-2$)... 

	 \subsection{Mean-field Heisenberg system}\label{secmf}  
	 Consider the Von Neumann-Heisenberg equation
	 \be\label{1} i\frac{\partial R}{\partial t} =\left[ L^W -aV,R\right] \ , \ t\in \R\ee
	 on a Hilbert space $\bH$. Here $R$ is a density operator, namely  a bounded linear operator on $\bH$ which is self-adjoint, non-negative and of trace equal one. $L^W$ is an Hermitian  operator generating a norm preserving group $e^{itL^W}$ on $\bH$ and $V$ is a non-linear operator. 
	 
	 In the context of mean-field system 	we consider  $(\bH, \left<\cdot, \cdot\right>)$ to be the Hilbert space $\Ll^2(\R^n)$ where $\left<\phi,\psi\right>:=\int_{\R^n} \phi\bar\psi$ the canonical inner product. A density operator can be represented by a kernel $K_R$ acting on $\phi\in \bH$ via 
	 $ R(\phi)=\int_{\R^n}K_R(x,y)\phi(y)dy$
	  and 
	  $Tr(R)(x):= K_R(x,x) $. In these terms we define $V(R)$ as the  operator acting on $\phi\in\bH$ by multiplication with 
	  
	    \be\label{2} V(R):=  I_\alpha *Tr(R )\ \ \ee
	 %The operator $L^W -aV$ is an Hermitian operator described below.  

	 Since $L^W-aV$ is hermitian for any prescribed potential $V$, all observables  along the orbit $t\mapsto R(\cdot, t)$ are unitary equivalent:
	 \be\label{ue}R(\cdot,t)= \exp\left(-i\int_0^t (L^W-aV(\cdot, s))ds\right) R(\cdot, 0) \exp\left(i\int_0^t (L^W-aV(\cdot, s))ds\right) \ .  \ee
	  We restrict ourselves  to a class of observables of a {\em finite rank} $k\in\mathbb{N}$. 
	 Hence 
	 the kernel of $R$ can be represented as 
	 \be\label{Rdef} R(x,y,t)= \sum_1^k \beta_j\psi_j(x,t)\bar\psi_j(y,t)\ee
	 where $\beta_j>0$ are the eigenvalues of $R$, which are constant in time, and $\psi_j(\cdot,t)\in \bH$ constitute  an orthonormal sequence for any $t\in\R$. Under this representation (\ref{1}) takes the form 
	 \be\label{srj} i\frac{\partial \psi_j}{\partial t} = (L^W -aV)\psi_j\ \ ,  j=1,2, \ldots k \ee
	 
	 The eigenvalues  $\beta_j\in [0,1]$ are interpreted as the probability of occupation of the $j-$ level     satisfying $\sum_{j=1}^k \beta_j=1$.  For any $t\in\R$, the trace of $R$ conditioned on $x\in\R^n$ is 
	 \be\label{TrR}Tr(R)(x,t)= \sum_{j=1}^k \beta_j|\psi_j(x,t)|^2 \ , \ee
	 and the potential $V$ is determined in terms of the solution $R$ by (\ref{2})
	 $$ V= \sum_{j=1}^k \beta_j I_\alpha* |\psi_j|^2 \ . $$

	 %	where $V$ satisfies (\ref{2}) which, in this presentation, takes the form
	 %	\be\label{3} V=  \Delta^{-1}\left( \sum_1^k \beta_j|\psi_j|^2\right) \ , \forall t\in\R \ . \ee

	 Consider now the Hamiltonian 
	 $${\cal E}^{(\alpha)}_{\beta,a}(\vec{\psi}) := \frac{1}{2}\sum_{j=1}^k\beta_j \left[\left<L^W\psi_j, \psi_j\right> - \frac{a}{2}\sum_{i=1}^k \beta_i \left<|\psi_j|^2,  I_\alpha*|\psi_i|^2\right>\right] $$
	acting on $k-$ orthonormal  frames $\vec{\psi}=(\psi_1, \ldots \psi_k)$.  The system (\ref{1}) (equivalently (\ref{srj}) ) is, in fact, an Hamiltonian system in the canonical variables $\{\psi_i, \bar{\psi}_j\}$:
	 \be\label{psisubt} i\frac{\partial \psi_j}{\partial t} = -\frac{1}{\beta_j} \delta_{\bar{\psi}_j}{\cal E}^{(\alpha)}_{\beta,a} ; \ \ \  i\frac{\partial \bar\psi_j}{\partial t} = \frac{1}{\beta_j} \delta_{\psi_j}{\cal E}^{(\alpha)}_{\beta,a}\ . \ee
	 In particular, ${\cal E}^{(\alpha)}_{\beta,a}$ is constant along the solution of (\ref{1}). 
	 
	 	\subsection{Steady states}\label{steadystate}
	 	 The steady states of this system are given by $\psi_j=e^{-i\lambda_j}\phi_j$ where $\{\phi_j\}$ is an orthonormal sequence corresponding to  eigenvalues $\lambda_j$ of the operator $L^W-aV$, satisfying 
	 	\be\label{ss}L^W\phi_j -a I_\alpha*\left(\sum_{i=1}^k \beta_i|\phi_i|^2\right)\phi_j -\lambda_j\phi_j =0\ .  \ee
	 	\begin{defi} \label{def1.1}
	 		$$\bH^1:=\{ \phi\in \Ll^2(\R^n), \ ; \ \nabla\phi\in \Ll^2(\R^n); \   \|\phi\|_2=1, \ \ \int_{\R^n} W|\phi|^2<\infty \ \}$$ 
	 		$$ \oplus^k\bH^1:= \left\{ \vec\phi=(\phi_1, \ldots \phi_k), ; \phi_j\in\bH^1 ; \ \left<\phi_j, \phi_i\right>=\delta_i^j , \ \ i,j\in\{1, \ldots k\}\right\} \ . $$
	 		$ \left<\left< \phi,\phi\right>\right>_W$ is the quadratic form on $\bH^1$ defined by the completion of $\left<L^W\phi,\phi\right>$:
	 		$$\left<\left< \phi,\phi\right>\right>_W:=\int_{\R^n} |\nabla\phi|^2 +W|\phi|^2 \ . $$
	 		Let
	 		$${\cal E}^{(\alpha)}_{\beta,a}(\vec\phi):= \frac{1}{2} \sum_1^k\beta_j\left[  \left<\left< \phi_j,\phi_j\right>\right>_W -\frac{a }{2}
	 		\sum_{i=1}^k\beta_i\left<|\phi_j|^2, I_\alpha* |\phi_i|^2\right>\right]\  $$
	 		is defined over $\oplus^k\bH^1$ (c.f. Corollary \ref{cor2.1} below). 
	 	\end{defi}

	 We formally obtain from (\ref{psisubt}) that  the steady states (\ref{ss}) are critical points of ${\cal E}^{(\alpha)}_{\beta,a}$ subject to the orthogonality constraints. 
\begin{prop}\label{propcritical}
	Suppose $\beta_j\not=\beta_i$ for any $1\leq i\not= j\leq k$. Then any critical point of ${\cal E}^{(\alpha)}_{\beta,a}$ restricted to orthonormal frames  $\overrightarrow{\phi}=(\bar\phi_1\ldots \bar\phi_k)$   is composed of $k$ normalized  eigenstates of the operator $L^W-a\bar V$ where $\bar V=I_\alpha *\left( \sum_1^k \beta_j|\bar\phi_j|^2\right)$. 
\end{prop}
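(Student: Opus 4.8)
The plan is to treat such a critical point $(\phi_1,\dots,\phi_k)$ (written $(\bar\phi_1,\dots,\bar\phi_k)$ in the statement) as a solution of a constrained Euler--Lagrange system, and then to exploit the non-degeneracy $\beta_i\neq\beta_j$ to force the associated Lagrange-multiplier matrix to be diagonal. First I would record that, by Corollary~\ref{cor2.1}, ${\cal E}^{(\alpha)}_{\beta,a}$ is continuously differentiable on $\oplus^k\bH^1$, and that the orthonormality constraint map $G(\vec\phi)_{il}:=\langle\phi_i,\phi_l\rangle-\delta_{il}$, valued in the Hermitian $k\times k$ matrices, has surjective differential at every orthonormal frame: indeed $DG(\vec\phi)[\vec\eta]_{il}=\langle\eta_i,\phi_l\rangle+\langle\phi_i,\eta_l\rangle$ already realizes an arbitrary Hermitian matrix with $\eta_i\in\mathrm{span}\{\phi_1,\dots,\phi_k\}$. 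Hence the Lagrange multiplier rule applies: there is a Hermitian matrix $(\mu_{il})$ such that for every $j$ the free variation $\delta_{\bar\phi_j}$ of $\,{\cal E}^{(\alpha)}_{\beta,a}-\sum_{i,l}\mu_{il}(\langle\phi_i,\phi_l\rangle-\delta_{il})\,$ vanishes on all of $\bH^1$.

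Next I would compute $\delta_{\bar\phi_j}{\cal E}^{(\alpha)}_{\beta,a}$. With $\rho:=\sum_i\beta_i|\phi_i|^2$ and $\bar V=I_\alpha*\rho$, the interaction part of the functional is $-\tfrac a4\langle\rho,I_\alpha*\rho\rangle$, and because the Riesz kernel $I_\alpha$ is symmetric the two terms produced by differentiating this double sum coincide, so that $\delta_{\bar\phi_j}\big(-\tfrac a4\langle\rho,I_\alpha*\rho\rangle\big)=-\tfrac a2\beta_j\,\bar V\phi_j$; combined with the kinetic contribution $\tfrac12\beta_j L^W\phi_j$ this yields
\[
\delta_{\bar\phi_j}{\cal E}^{(\alpha)}_{\beta,a}=\frac{\beta_j}{2}\,(L^W-a\bar V)\phi_j .
\]
The stationarity condition then reads, weakly on $\bH^1$, $\tfrac{\beta_j}{2}(L^W-a\bar V)\phi_j=\sum_i\mu_{ij}\phi_i$, i.e.
\[
(L^W-a\bar V)\phi_j=\sum_{i=1}^k c_{ij}\,\phi_i ,\qquad c_{ij}:=\frac{2\mu_{ij}}{\beta_j}.
\]
The step I expect to be the real technical obstacle is making these formal identities — essentially (\ref{psisubt}) — rigorous on $\oplus^k\bH^1$: that $\int W|\phi|^2$ and $\langle\rho,I_\alpha*\rho\rangle$ are finite and that ${\cal E}^{(\alpha)}_{\beta,a}$ is genuinely Fr\'echet differentiable there, which is precisely the content of Corollary~\ref{cor2.1}; once that is in place, the rest is linear algebra, and the key point is that the $j$-th variation carries the asymmetric factor $\beta_j$.

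Finally I would read off the eigen-equations. Pairing the Euler--Lagrange equation for $\phi_j$ with $\phi_l$ gives $q(\phi_j,\phi_l)=c_{lj}$, where $q(\phi,\psi):=\langle\langle\phi,\psi\rangle\rangle_W-a\int\bar V\,\phi\bar\psi$; by Definition~\ref{def1.1} the form $\langle\langle\cdot,\cdot\rangle\rangle_W$ is Hermitian and $\bar V$ is a real multiplier, so $q$ is Hermitian, whence $c_{lj}=q(\phi_j,\phi_l)=\overline{q(\phi_l,\phi_j)}=\overline{c_{jl}}$; that is, the matrix $C:=(c_{ij})$ is Hermitian. Since also $c_{ij}=2\mu_{ij}/\beta_j$ with $(\mu_{ij})$ Hermitian, we get $2\mu_{ij}/\beta_j=c_{ij}=\overline{c_{ji}}=2\mu_{ij}/\beta_i$, hence $\mu_{ij}\big(\beta_j^{-1}-\beta_i^{-1}\big)=0$ for all $i,j$. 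Because $\beta_i\neq\beta_j$ whenever $i\neq j$, the off-diagonal multipliers vanish, so $C$ is diagonal and $(L^W-a\bar V)\phi_j=\lambda_j\phi_j$ with $\lambda_j:=c_{jj}=2\mu_{jj}/\beta_j\in\R$; thus each $\phi_j$ is a normalized eigenstate of $L^W-a\bar V=L^W-aI_\alpha*\big(\sum_i\beta_i|\phi_i|^2\big)$, as claimed. It is worth emphasizing that the hypothesis $\beta_i\neq\beta_j$ is genuinely needed: if some occupation numbers coincided, $(\mu_{ij})$ could be diagonalized only after a unitary mixing of the frame, and such a mixing does not preserve $\rho$ and hence changes $\bar V$.
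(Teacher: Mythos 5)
Your proof is correct and follows essentially the same route as the paper's: introduce Lagrange multipliers for the orthonormality constraints, compute $\delta_{\bar\phi_j}{\cal E}^{(\alpha)}_{\beta,a}=\tfrac{\beta_j}{2}(L^W-a\bar V)\phi_j$, and use self-adjointness of $L^W-a\bar V$ together with $\beta_i\neq\beta_j$ to force the off-diagonal multipliers to vanish, so each $\phi_j$ is an eigenstate. The only differences are cosmetic refinements (explicit constraint qualification, Hermiticity of the multiplier matrix) that the paper leaves implicit.
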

For the proof of Proposition \ref{propcritical} see the beginning of section \ \ref{proofs}.

From now on we assume 
\be\label{betadef} \beta_1>\beta_2>\ldots \ > \beta_k> 0  \ . \ee

	\paragraph{Formulation of the problem}:
	Consider the multi-state Choquard system satisfying the equivalent of  (\ref{ss}): 
	\be\label{firsteq} (L^W -aV)\phi_j -\lambda_j\phi_j=0\ \ ; j=1, \ldots \ k \ee
	on $\R^n$. Here:
	\begin{description}
		\item[i)] $L^W=-\Delta+W$, $\Delta:=\sum_{i=1}^n \partial^2_{x_i}$ is the Laplacian on $\R^n$ and 
		\be\label{Winf}  W\in \Ll^\infty_{loc}(\R^n),  \ \ \ \lim_{|x|\rightarrow\infty} W(x)=\infty, \ \ \inf_{x\in\R^n}W(x)=W(0)=0 \  \ee
		\item[ii)]  $\vec\phi=(\phi_1, \ldots \phi_k)\in\oplus^k\bH^1$ are normalized  eigenfunctions of $L^W-aV$ and 
	 $\lambda_j \in\R$  are the corresponding  eigenvalues. 
		\item[iii)] 
			\be \label{rie}  V=\sum_{i=1}^k\beta_iI_\alpha *|\phi_i|^2   \ee
			where 
	$\beta_j$ are the {\em probabilities of occupation} of the states $j$, thus $\beta_j>0$ and $\sum_1^k\beta_j=1$. 
	\item[iv)] $a>0$.  
			%The physical case corresponds to $\alpha=2$ in (\ref{rie}). In this case $I_2$ is, up to a constant,  the fundamental solution of the Laplacian $(-\Delta)^{-1}$ on $\R^n$.  Another case of interest (at least, mathematically) which we consider in some detail  is $\alpha =n-2$. Both cases meet outside of our 3-dimensional spacial world at  $n=4$.

	\end{description}

%	$$ i\frac{\partial \psi}{\partial t} = L^W\psi -a\left( I_\alpha*|\psi|^2\right) \psi\  $$
%	on $\R^n$ and the Hamiltonian ${\cal E}^{(2)}_1$ takes the form
%		$${\cal E}^{(2)}_1(\psi) := \left<L^W\psi, \psi\right> - \frac{a}{2}\int_{\R^n}\int_{\R	\subsection{The dynamical equations}

\subsection{A crush review on Rietz kernels and its dual}\label{crush}

%\subsection{Dual formulation}\label{kdual} 
%Let us consider now the case $\alpha=2$, that is, $I_2=\Delta^{-1}$ is the fundamental solution of the Laplacian. 

%\begin{defi}
	Let us recall some definitions and theorems  we use later (for more details see \cite{Ten}): 
	
	For $V_1, V_2\in C^\infty_0(\R^n)$ and $\alpha\in (0,n)$, consider 
the quadratic form 
$$\left<V_1, V_2\right>_{\alpha/2}:=
A(-\alpha)\int_{\R^n}
\int_{\R^n} 
\frac{(V_1(x)-V_1(y)) (V_2(x)-V_2(y))}{|x-y|^{n+\alpha}}dxdy$$
where the constant $A(-\alpha)$ is defined as in (\ref{Aalpha}). 
If $\alpha=2$
$$\left<V_1, V_2\right>_{(1)}:=\int_{\R^n}\nabla V_1\cdot\nabla V_2 dx \ . $$
The closure of $C_0^\infty(\R^n)$ with respect to the norm induced by the inner product $\left<\cdot, \cdot\right>_{\alpha/2}$ is denoted by $\dot\bH^{\alpha/2}$.   We denote the associated norm by $\||\cdot\||_{\alpha/2}$. \footnote{Note that $\dot\bH^{\alpha/2}(\R^n)$ does not contain $\Ll^2(\R^n)$. In case $\alpha=2$ it is sometimes called Beppo-Levi space.} Recall that $\dot\bH^{\alpha/2}$ is a Hilbert space so, in particular, is weakly locally compact. 
%It is known that $\dot\bH^{\alpha/2}$ is continuously embedded in $\Ll^{\frac{2n}{n-\alpha}}(\R^n)$. Indeed, 
%The critical Sobolev inequality for fractional derivatives holds 
\begin{lemma}\label{critsob}\cite{hitch}
	For $\alpha\in (0,2]$, $n>2$,  the space $\dot{\bH}^{\alpha/2}$ is continuously embedded in $\Ll^{2n/(n-\alpha)}(\R^n)$, so there exists $S=S_{n,\alpha}>0$ such that
	$$ \|V\|_{2n/(n-\alpha)} \leq S_{n,\alpha}\|V\||_{\alpha/2} \ . $$
\end{lemma}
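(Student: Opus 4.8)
The plan is to identify $\dot\bH^{\alpha/2}(\R^n)$, on the Fourier side, with a homogeneous fractional Sobolev space and then read the embedding off the Hardy--Littlewood--Sobolev inequality for the Riesz kernel of order $\alpha/2$. First I would record the Fourier representation of the seminorm on the dense subspace $C_0^\infty(\R^n)$. For $\alpha=2$ this is just Plancherel applied to $\nabla V$, giving $\left<V,V\right>_{(1)}=(2\pi)^{-n}\int_{\R^n}|\xi|^2|\hat V(\xi)|^2\,d\xi$. For $\alpha\in(0,2)$, substituting $h=x-y$ in the Gagliardo double integral and applying Plancherel in $x$ yields
\[ \left<V,V\right>_{\alpha/2}=A(-\alpha)(2\pi)^{-n}\int_{\R^n}|\hat V(\xi)|^2\left(\int_{\R^n}\frac{|1-e^{i\xi\cdot h}|^2}{|h|^{n+\alpha}}\,dh\right)d\xi , \]
and the inner integral is a positive constant multiple of $|\xi|^\alpha$; thus, with the normalization of $A(-\alpha)$ fixed in (\ref{Aalpha}), one obtains the Aronszajn identity $\||V\||_{\alpha/2}^2=\gamma_{n,\alpha}\int_{\R^n}|\xi|^\alpha|\hat V(\xi)|^2\,d\xi$ for some constant $\gamma_{n,\alpha}>0$ (see \cite{hitch}).

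Next, given $V\in C_0^\infty(\R^n)$, I would set $g:={\cal F}^{-1}\big(|\xi|^{\alpha/2}\hat V\big)$. Then $g\in L^2(\R^n)$ with $\|g\|_2=c_{n,\alpha}\||V\||_{\alpha/2}$ for a constant $c_{n,\alpha}>0$, by the identity above and Plancherel, and moreover $V=I_{\alpha/2}*g$ — the Riesz potential of order $\alpha/2$, with $I_{\alpha/2}=A(\alpha/2)|x|^{\alpha/2-n}$ as in (\ref{Aalpha}) — since convolution with $I_{\alpha/2}$ acts on the Fourier side as multiplication by $|\xi|^{-\alpha/2}$. Because $n>2\ge\alpha$ we have $0<\alpha/2<n$, $1<2<\infty$, and $\frac1q:=\frac12-\frac{\alpha/2}{n}=\frac{n-\alpha}{2n}\in(0,\frac12)$, i.e. $q=\frac{2n}{n-\alpha}\in(2,\infty)$; hence the Hardy--Littlewood--Sobolev inequality (see \cite{Ten}) applies to the convolution $I_{\alpha/2}*g$ and gives
\[ \|V\|_{2n/(n-\alpha)}=\|I_{\alpha/2}*g\|_{q}\le C_{n,\alpha}\|g\|_2=S_{n,\alpha}\||V\||_{\alpha/2}\qquad\text{for all }V\in C_0^\infty(\R^n). \]

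Finally I would upgrade this from $C_0^\infty(\R^n)$ to all of $\dot\bH^{\alpha/2}$ by density: the inclusion $C_0^\infty(\R^n)\hookrightarrow L^{2n/(n-\alpha)}(\R^n)$ is bounded with respect to $\||\cdot\||_{\alpha/2}$ on the domain by the previous step, and since $L^{2n/(n-\alpha)}(\R^n)$ is complete it extends uniquely to a bounded linear map $\dot\bH^{\alpha/2}\to L^{2n/(n-\alpha)}(\R^n)$ obeying the same bound; this extension is injective (so that $\dot\bH^{\alpha/2}$ is honestly a function space) because if $V_m\in C_0^\infty(\R^n)$ is $\||\cdot\||_{\alpha/2}$-Cauchy with $\||V_m\||_{\alpha/2}\to0$ and $V_m\to f$ in $L^{2n/(n-\alpha)}$, then $\nabla V_m\to0$ in ${\cal D}'(\R^n)$, so $\nabla f=0$, whence $f$ is constant and therefore $f\equiv0$ since $f\in L^q$ with $q<\infty$. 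The one genuinely computational point, and the main obstacle, is the first step: evaluating $\int_{\R^n}|1-e^{i\xi\cdot h}|^2|h|^{-n-\alpha}\,dh$ as a positive multiple of $|\xi|^\alpha$ and pinning down the constant so that $\||\cdot\||_{\alpha/2}$ corresponds exactly to the Fourier multiplier $|\xi|^{\alpha/2}$; everything afterwards is the classical Hardy--Littlewood--Sobolev estimate together with a routine completion argument. (For $\alpha\in(0,2)$ one may instead invoke the direct dyadic proof of the fractional Sobolev inequality in \cite{hitch}, which bypasses the Fourier transform; for $\alpha=2$ the seminorm already equals $\|\nabla V\|_2$ and the classical Sobolev inequality applies verbatim.)
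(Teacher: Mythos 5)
Your proof is correct in substance, but note that the paper does not actually prove this lemma: it is quoted with a citation to DiNezza--Palatucci--Valdinoci \cite{hitch}, so there is no internal argument to compare against. Your route -- Plancherel plus the substitution $h=x-y$ to get the Aronszajn identity $\||V\||_{\alpha/2}^2=\gamma_{n,\alpha}\int|\xi|^\alpha|\hat V(\xi)|^2d\xi$ (the inner integral $\int|1-e^{i\xi\cdot h}|^2|h|^{-n-\alpha}dh$ is finite for $\alpha\in(0,2)$ and scales as $|\xi|^\alpha$, as you say), then writing $V=I_{\alpha/2}*g$ with $g\in\Ll^2$, $\|g\|_2\simeq\||V\||_{\alpha/2}$, and invoking the Riesz-potential mapping bound with $p=2$, $1/q=1/2-\alpha/(2n)$ -- is the classical Fourier/Hardy--Littlewood--Sobolev proof, and the exponent conditions ($\alpha/2<n$, $2<2n/\alpha$) are satisfied since $2\geq\alpha$ and $n>2$. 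In fact the HLS step is exactly the paper's own Lemma \ref{lemapqrietz} applied with $\alpha/2$ in place of $\alpha$, so your argument sits naturally inside the paper's toolkit; the reference \cite{hitch} itself proves the embedding by a more elementary dyadic decomposition that avoids the Fourier transform (which you correctly flag as an alternative), and for $\alpha=2$ the seminorm is $\|\nabla V\|_2$ and the classical Sobolev inequality suffices.

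One small logical slip: your parenthetical ``injectivity'' argument runs in the wrong direction. You assume $\||V_m\||_{\alpha/2}\to0$ and deduce that the $\Ll^{2n/(n-\alpha)}$ limit vanishes -- but that is immediate from the bound already proved, and it only shows the extension maps the zero element to the zero function. Injectivity of the embedding is the converse: if $V_m\in C_0^\infty$ is $\||\cdot\||_{\alpha/2}$-Cauchy and $V_m\to0$ in $\Ll^{2n/(n-\alpha)}$, one must show $\||V_m\||_{\alpha/2}\to0$. This is true here (e.g. $|\xi|^{\alpha/2}\hat V_m$ is Cauchy in $\Ll^2$, while $\hat V_m\to0$ in $\mathcal{S}'$, so the $\Ll^2$ limit vanishes), and it is the precise point where $\alpha<n$ is needed for $\dot\bH^{\alpha/2}$ to be a genuine function space; for the inequality claimed in the lemma, however, this refinement is immaterial and the rest of your proof stands.
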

 The fractional Laplacian $(-\Delta)^{\alpha/2}$, $0<\alpha<2$ is defined as a distribution by
 $$\left<V, \phi\right>_{\alpha/2}=\left<(-\Delta)^{\alpha/2} V, \phi\right>    \  \forall \ \ \phi\in C_0^\infty(\R^n) \ . $$
and the pointwise definition of the fractional Laplacian for $0<\alpha<2$ is given in terms of the singular integral 
$$(- \Delta)^{\alpha/2} V(x)= A(-\alpha) \int_{\R^n} \frac{V(x+y)-V(x)}{|y|^{n+\alpha} }dy\ . $$
For $\alpha=2$, the above definition is reduced to the classical, local Laplacian $-\Delta=\sum_{j=1}^n \partial^2_{x_j}$. 

The Rietz potential $I_\alpha$  is defined as a distribution via the quadratic form  induced by the dual of the $\left<\cdot, \cdot\right>_{\alpha/2}$ inner product:
\be\label{Ivar} \frac{1}{2}\left<I_\alpha*\rho,\rho\right>:=  
\sup_{V\in C_0^\infty(\R^n)} \left<\rho, V\right>-\frac{1}{2}\left<V, V\right>_{\alpha/2} \ . \ee
The Euler-Lagrange equation corresponding to the right hand side of (\ref{Ivar}) takes the form
\be\label{ELalpha} (-\Delta)^{\alpha/2} V=\rho \ . \ee
 In particular, $I_\alpha\equiv (-\Delta)^{-\alpha/2}$ corresponds to the right inverse of the fractional Laplacian 
\be\label{IinvD} I_\alpha*(-\Delta)^{\alpha/2}V = V\ . \ee   The  pointwise representation of the kernel $I_\alpha$  is given by  (\ref{Aalpha}).  Moreover
\begin{lemma}\cite{Rietz}\label{lemapqrietz}
	For any $0<\alpha<n$, the Rietz potential is a bounded operator  from  $\Ll^p(\R^n)$ to $\Ll^q(\R^n)$ iff $1<p<n/\alpha$ and $1/q=1/p-\alpha/n$. 
\end{lemma}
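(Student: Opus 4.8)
The plan is to prove this classical Hardy--Littlewood--Sobolev inequality in two directions: sufficiency of the exponent conditions $1<p<n/\alpha$, $1/q=1/p-\alpha/n$ for the boundedness of $I_\alpha$, and their necessity.

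\textbf{Sufficiency.} Assume $1<p<n/\alpha$ and $1/q=1/p-\alpha/n$, and let $f\in\Ll^p(\R^n)$, which we may take nonnegative. For $R>0$ I would split the convolution into near and far parts,
\[
I_\alpha*f(x)=A(\alpha)\int_{|y|<R}\frac{f(x-y)}{|y|^{n-\alpha}}\,dy+A(\alpha)\int_{|y|\ge R}\frac{f(x-y)}{|y|^{n-\alpha}}\,dy .
\]
Decomposing $\{|y|<R\}$ into dyadic annuli $\{2^{-j-1}R\le|y|<2^{-j}R\}$ and estimating the average of $f$ over each ball by the Hardy--Littlewood maximal function $Mf(x)$ gives a bound $C\,R^{\alpha}Mf(x)$ for the first integral (the geometric series in $2^{-j\alpha}$ converges because $\alpha>0$). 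H\"older's inequality (with conjugate exponent $p'$) bounds the second integral by $C\|f\|_p R^{\alpha-n/p}$, the integral $\int_{|y|\ge R}|y|^{(\alpha-n)p'}\,dy$ being finite precisely when $\alpha<n/p$, i.e. $p<n/\alpha$. Choosing $R$ so that the two terms balance, $R^{n/p}\sim\|f\|_p/Mf(x)$, yields the pointwise inequality $I_\alpha*f(x)\le C\,(Mf(x))^{1-\alpha p/n}\|f\|_p^{\alpha p/n}$. Raising to the power $q$, using $q(1-\alpha p/n)=p$ (which is exactly the exponent relation), and invoking the maximal inequality $\|Mf\|_p\le C_p\|f\|_p$ --- valid since $p>1$ --- gives $\|I_\alpha*f\|_q\le C\|f\|_p$.

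\textbf{Necessity.} Three observations close the ``only if'' part. First, from the dilations $f_\lambda:=f(\lambda\,\cdot)$ one computes $\|f_\lambda\|_p=\lambda^{-n/p}\|f\|_p$ and $I_\alpha*f_\lambda=\lambda^{-\alpha}(I_\alpha*f)(\lambda\,\cdot)$, hence $\|I_\alpha*f_\lambda\|_q=\lambda^{-\alpha-n/q}\|I_\alpha*f\|_q$; applying a putative bound to all $\lambda>0$ with a fixed $f\in C_0^\infty$, $f\ge 0$, $f\not\equiv 0$ (so that $0<\|I_\alpha*f\|_q<\infty$) forces $-\alpha-n/q=-n/p$, i.e. $1/q=1/p-\alpha/n$. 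Second, under this relation $p=1$ gives $q=n/(n-\alpha)$; testing against the approximate identity $\eps^{-n}\mathbf 1_{\{|x|<\eps\}}$ and using Fatou would force $I_\alpha=A(\alpha)|x|^{\alpha-n}\in\Ll^{n/(n-\alpha)}(\R^n)$, which is false since $\int|x|^{-n}\,dx=\infty$; hence $p>1$. Third, the relation makes $q=\infty$ when $p=n/\alpha$; choosing $f(x)=|x|^{-\alpha}(\log 1/|x|)^{-\beta}\mathbf 1_{\{|x|<1/2\}}$ with $\alpha/n<\beta\le 1$ (possible because $\alpha<n$) gives $f\in\Ll^{n/\alpha}(\R^n)$ while $I_\alpha*f$ is unbounded near the origin; hence $p<n/\alpha$. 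Together with the first observation, these exclude every pair outside the asserted range, and the ``if'' direction is the sufficiency above.

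\textbf{Main obstacle.} The only genuinely non-elementary ingredient is the $\Ll^p$-boundedness, $p>1$, of the Hardy--Littlewood maximal operator, which drives the sufficiency argument; the dyadic decomposition, the H\"older estimate, the optimization in $R$, and the scaling and counterexample computations are all routine. An alternative route to sufficiency avoids the maximal function by applying the weak Young (O'Neil) convolution inequality in Lorentz spaces, using $|x|^{\alpha-n}\in\Ll^{n/(n-\alpha),\infty}(\R^n)$, at the cost of invoking Lorentz-space interpolation machinery; either way the endpoint exclusions $p=1$ and $p=n/\alpha$ must be argued by hand as above.
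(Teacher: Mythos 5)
The paper does not prove this lemma at all: it is quoted as a classical fact with a citation to Samko's book on hypersingular integrals, so there is no in-paper argument to compare yours against. Your proof is correct and is the standard one: the sufficiency part is Hedberg's pointwise inequality $I_\alpha*f(x)\le C\,(Mf(x))^{1-\alpha p/n}\|f\|_p^{\alpha p/n}$ obtained from the dyadic near-part estimate plus the H\"older far-part estimate (where the convergence of $\int_{|y|\ge R}|y|^{(\alpha-n)p'}dy$ is exactly the condition $p<n/\alpha$, and the maximal inequality is exactly where $p>1$ enters), and the necessity part correctly combines the dilation identity forcing $1/q=1/p-\alpha/n$ with the two endpoint counterexamples ($p=1$ via an approximate identity and Fatou, $p=n/\alpha$ via the logarithmic example); the only cases not hit by these three observations are $p>n/\alpha$, which the scaling relation rules out since it would force $1/q<0$. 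The one step worth stating explicitly is that the a.e.\ convergence $I_\alpha*(\eps^{-n}\mathbf 1_{B_\eps})\to I_\alpha$ needed before applying Fatou holds at every $x\ne 0$ by continuity of $|\cdot|^{\alpha-n}$ away from the origin; with that said, the argument is complete.
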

Our main results, described below, concern the Choquard problem on $\R^n$. However, in order to overcome problems of lack of compactness, we shell need to introduce a version of this problem in bounded domain $\Omega\subset \R^n$. In order to handle this we need to define the Green function corresponding to the fractional Laplacian $(-\Delta)^\alpha$ in a bounded domain under homogeneous Dirichlet condition.   This is the motivation to define the {\em local Rietz potential}  $I_\alpha^\Omega$ 
on $\Ll^p(\Omega)$ by
\be\label{defIomega}  \frac{1}{2}\left<I^\Omega_\alpha(\rho),\rho\right>:=  
\sup_{\phi\in C_0^\infty(\Omega)} \left<\rho, \phi\right>-\frac{1}{2}\left<\phi, \phi\right>_{\alpha/2} \ . \ee
In the case $\alpha=2$ this definition induces the Green function of the Dirichlet problem $I_2^\Omega\equiv (-\Delta_\Omega)^{-1}$, that is, 
$V(x)= \int_\Omega I_2^\Omega(x,y)\rho(y)dy$ is the solution of the Poisson problem 
\be\label{deltav=rho} \Delta V+\rho=0 \ \ \ x\in \Omega; \ \ V=0 \ \ \text{on} \ \ \partial\Omega \ . \ee
Not much is known\footnote{But see Section \ref{FR}-c.}  on the Green function $I_\alpha^\Omega$ for  $\alpha<2$. In case $\alpha=2$ the maximum principle implies immediately that  for any  $x,y\in\Omega$ the inequality $I_2(x-y)\geq I_2^\Omega(x,y)$ holds, and that $I_2^\Omega(x,y)=0$ if $x\in\Omega, y\in\partial\Omega$.  In the general case we  obtain from (\ref{defIomega}) :
\begin{lemma}\label{lema1.3} For any $0<\alpha<2$, 
	 $\Omega_2\supset\Omega_1$  and $supp(\rho)\subset\Omega_1$ then $I_\alpha^{\Omega_1}(\rho )
	\leq I_\alpha^{\Omega_2}(\rho)\leq I_\alpha*\rho $. 
	
	In addition: Let  $\Omega_j\subset \R^n$, $\Omega_j\rightarrow \R^n$ is a monotone sequence of domains in $\R^n$.  If $\rho_j$ converges to $\rho$ in  $\Ll^p(\R^n)$, $p\in (1, n/\alpha)$,     and  $\rho_j$ are supported in  $\Omega_j$ then 
	$$ \lim_{j\rightarrow\infty}I^{\Omega_j}_\alpha(\rho_j)=I_\alpha *\rho  \ \text {in} \ \ \Ll^{\frac{pn}{n-p\alpha}}(\R^n)\  .$$

\end{lemma}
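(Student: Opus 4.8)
The plan is to work entirely from the variational definition (\ref{defIomega}). Write $V_\Omega:=I_\alpha^\Omega(\rho)$ for the maximizer in (\ref{defIomega}); exactly as (\ref{ELalpha}) follows from (\ref{Ivar}), it is the unique element of $\dot\bH^{\alpha/2}_0(\Omega)$ (the closure of $C_0^\infty(\Omega)$ in the $\||\cdot\||_{\alpha/2}$ norm) obeying the weak identity $\langle V_\Omega,\phi\rangle_{\alpha/2}=\langle\rho,\phi\rangle$ for all $\phi\in\dot\bH^{\alpha/2}_0(\Omega)$, its existence and uniqueness being the Riesz representation theorem once Lemma \ref{critsob} is used to see that $\phi\mapsto\langle\rho,\phi\rangle$ is bounded on $\dot\bH^{\alpha/2}$. (For $\rho\in\Ll^p$, $p\in(1,n/\alpha)$, outside the natural energy class $\Ll^{2n/(n+\alpha)}$ one first treats bounded compactly supported data and then invokes the uniform bound obtained below to extend $I_\alpha^\Omega$ to a bounded operator $\Ll^p(\R^n)\to\Ll^q(\R^n)$, $q:=pn/(n-p\alpha)$.) Since $\rho\ge0$ in all our applications we assume this; signed densities reduce to it through $I_\alpha^\Omega(\rho)=I_\alpha^\Omega(\rho_+)-I_\alpha^\Omega(\rho_-)$.

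For the first part, the inclusions $C_0^\infty(\Omega_1)\subset C_0^\infty(\Omega_2)\subset C_0^\infty(\R^n)$ make the suprema in (\ref{defIomega}) and (\ref{Ivar}) run over nested sets, so $\langle I_\alpha^{\Omega_1}(\rho),\rho\rangle\le\langle I_\alpha^{\Omega_2}(\rho),\rho\rangle\le\langle I_\alpha*\rho,\rho\rangle$ is immediate. The \emph{pointwise} comparison is a truncation argument based on the Markovian inequality
\be\label{dirineq}\langle f,f_+\rangle_{\alpha/2}\ \ge\ \||f_+\||_{\alpha/2}^2\qquad(0<\alpha\le2),\ee
which for $\alpha<2$ holds because $(f(x)-f(y))(f_+(x)-f_+(y))\ge(f_+(x)-f_+(y))^2$ pointwise (an identity when $\alpha=2$). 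First $V_\Omega\ge0$: testing the weak identity against $(-V_\Omega)_+\ge0$ gives $0\le\langle\rho,(-V_\Omega)_+\rangle=\langle V_\Omega,(-V_\Omega)_+\rangle_{\alpha/2}\le-\||(-V_\Omega)_+\||_{\alpha/2}^2\le0$; in particular $V_{\Omega_2}\ge0$ and $I_\alpha*\rho\ge0$. Hence $w:=(V_{\Omega_1}-V_{\Omega_2})_+$ vanishes a.e.\ off $\Omega_1$ (there $V_{\Omega_1}=0\le V_{\Omega_2}$), so $w\in\dot\bH^{\alpha/2}_0(\Omega_1)\subset\dot\bH^{\alpha/2}_0(\Omega_2)$; subtracting the two weak identities tested against $w$ yields $\langle V_{\Omega_1}-V_{\Omega_2},w\rangle_{\alpha/2}=0$, whence $w\equiv0$ by (\ref{dirineq}), i.e.\ $V_{\Omega_1}\le V_{\Omega_2}$. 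The bound $V_{\Omega_2}\le I_\alpha*\rho$ is identical with $(\Omega_1,\Omega_2)$ replaced by $(\Omega_2,\R^n)$, the weak identity for $I_\alpha*\rho$ being (\ref{ELalpha}).

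For the second part, assume $\Omega_j\uparrow\R^n$. From Part 1 and linearity, $|I_\alpha^{\Omega_j}(\sigma)|\le I_\alpha^{\Omega_j}(\sigma_+)+I_\alpha^{\Omega_j}(\sigma_-)\le I_\alpha*|\sigma|$ pointwise whenever $\mathrm{supp}\,\sigma\subset\Omega_j$, so Lemma \ref{lemapqrietz} gives $\|I_\alpha^{\Omega_j}(\sigma)\|_q\le C_{n,\alpha,p}\|\sigma\|_p$ uniformly in $j$, and by density this holds for all $\sigma\in\Ll^p(\R^n)$. Now fix a compactly supported $\sigma\ge0$, say $\mathrm{supp}\,\sigma\subset B_R$. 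Once $\Omega_j\supset B_R$, Part 1 shows $I_\alpha^{\Omega_j}(\sigma)$ is nondecreasing in $j$ and $\le I_\alpha*\sigma\in\Ll^q$, hence converges a.e.\ to some $W$ with $0\le W\le I_\alpha*\sigma$ and, by dominated convergence, in $\Ll^q$; moreover $\||I_\alpha^{\Omega_j}(\sigma)\||_{\alpha/2}^2=\langle\sigma,I_\alpha^{\Omega_j}(\sigma)\rangle\le\langle\sigma,I_\alpha*\sigma\rangle$ is bounded, so a subsequence converges weakly in the Hilbert space $\dot\bH^{\alpha/2}$, necessarily to $W$. Passing to the limit in $\langle I_\alpha^{\Omega_j}(\sigma),\phi\rangle_{\alpha/2}=\langle\sigma,\phi\rangle$ for a fixed $\phi\in C_0^\infty(\R^n)$ (valid once $\Omega_j\supset\mathrm{supp}\,\phi$) gives $\langle W,\phi\rangle_{\alpha/2}=\langle\sigma,\phi\rangle$, so $W=I_\alpha*\sigma$ by the uniqueness in (\ref{Ivar})--(\ref{ELalpha}); the limit being subsequence-independent, $I_\alpha^{\Omega_j}(\sigma)\to I_\alpha*\sigma$ in $\Ll^q$, and splitting into $\pm$ parts the same holds for any compactly supported $\sigma\in\Ll^p$. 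Finally, given $\rho_j\to\rho$ in $\Ll^p$, fix $\eps>0$, pick $R$ with $\|\rho-\rho\mathbf{1}_{B_R}\|_p<\eps$, put $\tilde\rho:=\rho\mathbf{1}_{B_R}$, and for large $j$
\be\label{tripsplit}\|I_\alpha^{\Omega_j}(\rho_j)-I_\alpha*\rho\|_q\le\|I_\alpha^{\Omega_j}(\rho_j-\tilde\rho)\|_q+\|I_\alpha^{\Omega_j}(\tilde\rho)-I_\alpha*\tilde\rho\|_q+\|I_\alpha*(\tilde\rho-\rho)\|_q\ ;\ee
by the uniform bound and Lemma \ref{lemapqrietz} the first and third terms are $\le C_{n,\alpha,p}(\|\rho_j-\rho\|_p+\eps)$ and $\le C_{n,\alpha,p}\eps$, while the middle term $\to0$ by the compactly supported case, so $\limsup_j\|I_\alpha^{\Omega_j}(\rho_j)-I_\alpha*\rho\|_q\le 3C_{n,\alpha,p}\eps$ and $\eps\downarrow0$ concludes.

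Everything above is soft manipulation except one step: identifying the monotone limit $W$ of the Dirichlet potentials with $I_\alpha*\sigma$, that is, ruling out loss of mass as $\Omega_j\uparrow\R^n$. This is precisely where the weak compactness of $\dot\bH^{\alpha/2}$ and the uniqueness of the distributional solution of $(-\Delta)^{\alpha/2}W=\sigma$ enter, and I expect it to be the crux. A secondary technical point, arising because for general $p$ the target space $\Ll^q$ is not controlled by $\||\cdot\||_{\alpha/2}$, is that the final $\Ll^q$ convergence must be drawn from the uniform $\Ll^p\!\to\!\Ll^q$ bound together with a.e.\ convergence rather than from the $\dot\bH^{\alpha/2}$ estimate, which is both why the reduction (\ref{tripsplit}) to compactly supported data is made and why in Part 1 the pointwise comparison, not merely the energy comparison, is needed.
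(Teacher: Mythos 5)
Your argument is correct, and it is worth noting that the paper itself offers essentially no proof of this lemma: it is stated with the remark that it is obtained ``from (\ref{defIomega})'', and for $\alpha<2$ only the quadratic-form comparison $\left<I_\alpha^{\Omega_1}(\rho),\rho\right>\leq\left<I_\alpha^{\Omega_2}(\rho),\rho\right>\leq\left<I_\alpha*\rho,\rho\right>$ is in fact immediate from the nested suprema in (\ref{defIomega}) and (\ref{Ivar}). What you supply --- and what the paper actually needs, since the proof of Theorem \ref{maintheorem}(ii,iii) uses the pointwise domination of $I_\alpha^{B_R^n}(\bar\rho_m)$ by $I_\alpha*\bar\rho_m$ in order to invoke Lemma \ref{lemapqrietz} --- is the weak maximum principle for the fractional Dirichlet form via the truncation inequality $\left<f,f_+\right>_{\alpha/2}\geq\||f_+\||^2_{\alpha/2}$, and then the monotone-limit identification for the second part. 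Both steps are sound: the pointwise inequality $(f(x)-f(y))(f_+(x)-f_+(y))\geq(f_+(x)-f_+(y))^2$ is correct, it yields $V_\Omega\geq 0$ and $V_{\Omega_1}\leq V_{\Omega_2}\leq I_\alpha*\rho$ for $\rho\geq 0$ (and you are right that for signed $\rho$ only the split or the quadratic-form version can hold), and the limit passage --- uniform $\Ll^p\to\Ll^q$ bound from the pointwise domination, monotone plus dominated convergence for compactly supported data, identification of the limit through weak compactness in $\dot\bH^{\alpha/2}$ and uniqueness of the solution of $(-\Delta)^{\alpha/2}W=\sigma$, followed by the three-term splitting --- is exactly the right architecture. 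Your closing diagnosis of where the real content lies (ruling out loss of mass as $\Omega_j\uparrow\R^n$, and drawing the $\Ll^q$ convergence from the $\Ll^p\to\Ll^q$ bound rather than from the energy) is accurate.

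The one point you pass over silently is that $(-V_\Omega)_+$ and $(V_{\Omega_1}-V_{\Omega_2})_+$ must be admissible test functions, i.e.\ must lie in the closure of $C_0^\infty(\Omega)$ in the $\||\cdot\||_{\alpha/2}$ norm, not merely vanish a.e.\ off $\Omega$ and have finite energy. This is the Markovian/lattice property of the fractional Dirichlet space (stability of the closed subspace under $f\mapsto f_+$, and membership of a nonnegative $w$ dominated by $V_{\Omega_1}$), which is standard and unproblematic for the balls $B_R^n$ the paper actually uses, but it deserves a sentence and a reference, since for a completely arbitrary open set the identification of ``zero a.e.\ outside $\Omega$'' with the closure of $C_0^\infty(\Omega)$ can fail. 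With that caveat recorded, your proof is complete and fills a gap the paper leaves open.
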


%	For $\alpha=2$, the Laplacian $\Delta:=\Delta^1$ is the classical  local operator in the Euclidean coordinates $\Delta V=\sum_{j=1}^n \frac{\partial^2}{\partial x_j^2} V$. 
%	\begin{defi}\label{BeLe}
%		\begin{enumerate} \ . 
%			\item For $0<\alpha<2$, the space $\dot{\bH}^{\alpha/2}$ is defined as the completion of $C_0^\infty(\R^n)$ with respect to the norm
%			$$ \|V\||_{\alpha/2} :=\int_{\R^n}|\nabla^{\alpha/2}V|^2 :=  \sqrt{\int_{\R^n}\int_{\R^n} \frac{|V(x)-V(y)|^2}{|x-y|^{n+\alpha}}dxdy} \ . $$
%			If $\alpha=2$ then $\dot{\bH}^1$ is the completion of $C_0^\infty(\R^n)$ with respect to the norm
%			$\|V\|_{(1)}:= \sqrt{\int_{\R^n}|\nabla V|^2}$ where $\nabla$ is the local (classical) gradient. 
%			
%			\item For $0\leq\alpha\leq 2$,  The fractional Laplacian is a a right inverse of $-I_\alpha$, that is
%			$$ I_\alpha\circ \Delta^{\alpha/2} (V)+V=0$$
%			for $V\in \Ll^p(\R^n)$, $1\leq p\leq n/\alpha$.  
%			
%		\end{enumerate}
%	\end{defi}

%	\end{defi}
\subsection{Spectrum of the Schrödinger  operator}
One of the most celebrated results on the discreteness of spectrum for the Schrödinger operator $-\Delta+W$  in $\Ll^2(\R^n)$  with a locally integrable potential is a result of K. Friedrichs \cite{Frid}  which  ensures the discreteness of spectrum if the potential $W$  grows at infinity at arbitrary rate. This and (\ref{Winf})  implies, in particular
\begin{prop}\label{firstprop}
Let $V\in C_0^\infty (\R^n)$ and $W$ satisfies (\ref{Winf}). %Let $(\lambda, \phi)$ be an eigenstate of the  Schrödinger operator $L^W-aV$, namely 
%$$( L^W-aV+\lambda)\phi=0  \ . $$
  Then  the spectrum of the operator $L^W-aV$
 is composed of an infinite set of eigenvalues  $\lambda_j\rightarrow \infty$ and the corresponding  normalized eigenfunctions  $\phi_j $ 
 constitute a complete orthonormal base of  $\Ll^2(\R^n)$. 
%are continuous with respect to  $\|\cdot\|_0$. 
\end{prop}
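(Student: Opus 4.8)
The plan is to realize $L^W-aV$ as the self-adjoint operator attached to a closed, bounded-below quadratic form, show that this operator has compact resolvent, and then read off the stated spectral properties from the spectral theorem for compact self-adjoint operators. (Alternatively, one may simply invoke Friedrichs' theorem \cite{Frid} for $-\Delta+W$ and observe that $-aV$, being bounded with compact support, is a form-compact perturbation and hence cannot destroy discreteness of the spectrum; the argument below spells this out.)

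First I would introduce the form $q(\phi):=\int_{\R^n}\bigl(|\nabla\phi|^2+W|\phi|^2-aV|\phi|^2\bigr)\,dx$ on the domain $Q:=\{\phi\in\Ll^2(\R^n):\ \nabla\phi\in\Ll^2(\R^n),\ \int_{\R^n}W|\phi|^2<\infty\}$. By (\ref{Winf}), $W\geq 0$, so the form of $L^W=-\Delta+W$ is nonnegative; since $V\in C_0^\infty(\R^n)$ is bounded, the term $-aV|\phi|^2$ is controlled by $a\|V\|_\infty\|\phi\|_2^2$, so $q$ differs from the form of $L^W$ by a bounded (in fact, as noted below, form-compact) perturbation. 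Hence $q$ is bounded below and closed on $Q$, and by the representation theorem it defines a self-adjoint operator bounded below, which agrees with $L^W-aV$ on $C_0^\infty(\R^n)$; this is the Friedrichs realization we work with.

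The heart of the matter is the compactness of the embedding $Q\hookrightarrow\Ll^2(\R^n)$. Let $\{\phi_m\}\subset Q$ be bounded in the form norm, say $\int_{\R^n}(|\nabla\phi_m|^2+W|\phi_m|^2)\leq C$ and $\|\phi_m\|_2\leq C$. On any ball $B_R$, since $W\in\Ll^\infty_{loc}(\R^n)$, the sequence is bounded in $H^1(B_R)$, so by Rellich--Kondrachov and a diagonal argument a subsequence converges in $\Ll^2_{loc}(\R^n)$. The tails are uniformly small: setting $m_R:=\inf_{|x|\geq R}W(x)$, condition (\ref{Winf}) gives $m_R\to\infty$ as $R\to\infty$, and $\int_{|x|\geq R}|\phi_m|^2\leq m_R^{-1}\int_{\R^n}W|\phi_m|^2\leq C\,m_R^{-1}$, uniformly in $m$. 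Combining uniformly small tails with $\Ll^2_{loc}$ convergence yields convergence in $\Ll^2(\R^n)$, so the embedding is compact. I expect this local-compactness-plus-tail-estimate splitting to be the only genuine obstacle; it is exactly where the growth hypothesis on $W$ enters.

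Compactness of $Q\hookrightarrow\Ll^2(\R^n)$ is equivalent to $L^W-aV$ having compact resolvent: for $\mu$ strictly below $\inf q$, the resolvent $(L^W-aV-\mu)^{-1}$ factors through the bounded inclusion of $\Ll^2$ into $Q$ (with the graph norm) followed by the compact inclusion $Q\hookrightarrow\Ll^2$, hence is a compact, self-adjoint, positive operator on $\Ll^2(\R^n)$. The spectral theorem for such operators gives an orthonormal basis $\{\phi_j\}$ of $\Ll^2(\R^n)$ consisting of eigenfunctions of $(L^W-aV-\mu)^{-1}$, equivalently of $L^W-aV$, with eigenvalues $\lambda_j$ of finite multiplicity accumulating only at $+\infty$; since the spectrum is bounded below and infinite, $\lambda_j\to\infty$. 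Finally, $-aV$ with $V\in C_0^\infty$ being relatively form-bounded with bound $0$ (indeed form-compact) confirms that replacing $L^W$ by $L^W-aV$ preserves the compact resolvent property, closing the argument.
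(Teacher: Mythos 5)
Your argument is correct, but it is a different (more self-contained) route than the paper takes. The paper offers no independent proof of this proposition: it simply invokes Friedrichs' classical theorem \cite{Frid} on discreteness of the spectrum of $-\Delta+W$ for potentials growing at infinity, and absorbs the term $-aV$ into the potential, since $V\in C_0^\infty(\R^n)$ is bounded with compact support and therefore $W-aV$ still satisfies the growth condition (\ref{Winf}) up to an additive constant --- essentially the alternative you sketch in your opening parenthesis. What you do instead is prove the relevant special case of that theorem from scratch: closedness and lower boundedness of the form, compactness of the embedding of the form domain into $\Ll^2(\R^n)$ via Rellich--Kondrachov on balls combined with the uniform tail estimate $\int_{|x|\geq R}|\phi_m|^2\leq m_R^{-1}\int W|\phi_m|^2$ with $m_R=\inf_{|x|\geq R}W\to\infty$, hence compact resolvent, hence the spectral conclusion. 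All the steps are sound, and the tail-plus-local splitting is exactly where the hypothesis $\lim_{|x|\to\infty}W(x)=\infty$ must enter. The trade-off is the expected one: the paper's citation is shorter and covers merely locally integrable $W$, while your argument is self-contained, makes transparent why the compactly supported perturbation $-aV$ is harmless (form bound zero, indeed form-compact), and is the natural template for the extension to $V\in\dot\bH^{\alpha/2}$ carried out in Proposition \ref{secondprop} and Lemma \ref{lemma3.2}.
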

For the proof of Proposition \ref{secondprop} see Lemma \ref{lemma3.2}    
in  Section \ref{proofs}.  
\begin{prop}\label{secondprop}
	If $3\leq n\leq 5$, $0<\alpha\leq 2$,   then Proposition \ref{firstprop} can be extended for $V\in \dot\bH^{\alpha/2}(\R^n)$, and $V \mapsto\lambda_j(V)$ is a continuous functional  in the $\||\cdot\||_{\alpha/2}$ norm. 
\end{prop}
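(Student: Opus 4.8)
The plan is to treat $-aV$ as a relatively form-bounded perturbation of $L^W = -\Delta + W$ with relative bound zero, so that the discreteness of the spectrum (Proposition \ref{firstprop}) and the min-max characterization of eigenvalues carry over. The first step is the key analytic input: for $V \in \dot\bH^{\alpha/2}(\R^n)$ with $3 \le n \le 5$ and $0 < \alpha \le 2$, I would show that multiplication by $V$ defines a form-compact perturbation of $L^W$. Concretely, Lemma \ref{critsob} gives $V \in \Ll^{2n/(n-\alpha)}(\R^n)$; by H\"older, for $\phi \in \bH^1$,
$$\left| \int_{\R^n} V |\phi|^2 \right| \le \|V\|_{2n/(n-\alpha)} \, \|\phi\|_{2n/(n+\alpha)}^2 \cdot (\text{suitable exponent bookkeeping}),$$
where the Sobolev exponent $2n/(n-2)$ of $H^1(\R^n)$ must dominate the exponent $\phi$ needs to sit in; the constraint $3 \le n \le 5$ together with $\alpha \le 2$ is exactly what makes $|\phi|^2$ land in the dual space $\Ll^{(2n/(n-\alpha))'} = \Ll^{2n/(n+\alpha)}$ via the Sobolev and $\Ll^2$ embeddings interpolated. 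Then a standard argument (split $V = V_R + (V - V_R)$ with $V_R$ bounded with compact support) shows the form $\phi \mapsto \int V|\phi|^2$ is infinitesimally form-bounded with respect to $\int |\nabla \phi|^2$, hence a fortiori with respect to the form of $L^W$.

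The second step is to invoke the KLMN theorem: $L^W - aV$ is then a well-defined self-adjoint operator, bounded below, with form domain equal to that of $L^W$. Because $W \to \infty$ (\ref{Winf}) forces $L^W$ to have compact resolvent, and the perturbation is form-compact, $L^W - aV$ also has compact resolvent; thus its spectrum is a sequence of eigenvalues $\lambda_j(V) \to \infty$ with a complete orthonormal system of eigenfunctions, extending Proposition \ref{firstprop} from $C_0^\infty$ to $\dot\bH^{\alpha/2}$. The eigenvalues are given by the Courant--Fischer min-max formula
$$\lambda_j(V) = \min_{\substack{F \subset \bH^1 \\ \dim F = j}} \ \max_{\substack{\phi \in F \\ \|\phi\|_2 = 1}} \left( \left<\left< \phi, \phi \right>\right>_W - a \int_{\R^n} V |\phi|^2 \right).$$

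For the continuity claim, I would use this min-max representation directly. Given $V_m \to V$ in $\||\cdot\||_{\alpha/2}$, Lemma \ref{critsob} gives $V_m \to V$ in $\Ll^{2n/(n-\alpha)}$, and the H\"older estimate above yields a uniform bound
$$\left| \int (V_m - V)|\phi|^2 \right| \le C \|V_m - V\|_{2n/(n-\alpha)} \left( \left<\left< \phi, \phi \right>\right>_W + \|\phi\|_2^2 \right)$$
valid on all of $\bH^1$ (this is essentially the infinitesimal-bound estimate made quantitative). Feeding this into the min-max quotient for a fixed $j$, and noting that the competing subspaces and their eigenfunctions have $\left<\left<\phi,\phi\right>\right>_W$ uniformly bounded along the sequence (since $\lambda_j(V_m)$ is bounded, which follows from the same estimate), one gets $|\lambda_j(V_m) - \lambda_j(V)| \le \varepsilon_m \to 0$. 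I expect the main obstacle to be the bookkeeping in the first step — verifying precisely that the chain of exponents closes under $3 \le n \le 5$, $0 < \alpha \le 2$, i.e. that $2n/(n-\alpha) \le \infty$ is not enough and one really needs $|\phi|^2 \in \Ll^{2n/(n+\alpha)}$ to be controlled by the $H^1$ norm, which requires $2n/(n+\alpha) \le n/(n-2)$, equivalently $n \le 2 + \alpha + \text{(slack)}$; the stated range is the sharp window where this holds with room to spare for the infinitesimal bound. Once that estimate is in hand, the self-adjointness, discreteness, and continuity are all routine consequences of KLMN and min-max.
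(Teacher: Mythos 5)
Your argument is correct in substance but takes a genuinely different route from the paper. The paper proves this proposition through Lemma \ref{lemma3.2}: it works with the weighted sum $G_{\beta,a}(V)=\sum_{j}\beta_j\lambda_j(V)$, which by Lemma \ref{lemma2.2} is an infimum of functionals that are \emph{linear} in $V$, hence concave; evaluating the linear functional at the minimizing frame for one potential gives a Lipschitz-type bound of $G_{\beta,a}(\tilde V_1)-G_{\beta,a}(\tilde V_2)$ by $a\sum_j\beta_j\|\tilde V_1-\tilde V_2\|_q\,\||\phi_j^{(\tilde V_2)}|^2\|_{q/(q-1)}$, the eigenfunction factor being controlled by Lemma \ref{lemma2.3} (the same Gagliardo--Nirenberg/critical-Sobolev bookkeeping you carry out), and the individual eigenvalues are then recovered by subtracting two such weighted sums; the passage from $C_0^\infty$ to $\dot\bH^{\alpha/2}$ is by density and the embedding into $\Ll^{2n/(n-\alpha)}$. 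You instead run the standard operator-theoretic machinery: infinitesimal form-boundedness of the multiplication by $V$, KLMN, preservation of the compact resolvent, and Courant--Fischer min-max for each $\lambda_j$ separately. Your route buys a cleaner justification of the first half of the statement --- self-adjointness, discreteness of the spectrum and completeness of the eigenbasis for non-smooth $V\in\dot\bH^{\alpha/2}$ --- which the paper leaves implicit (it only extends the eigenvalue functionals by continuity); the paper's route avoids min-max for higher eigenvalues altogether and delivers, as a by-product, the concavity of $G_{\beta,a}$, which is exactly what the dual functional ${\cal H}^{W,\alpha}_{\beta,a}$ needs later. Your continuity step via the uniform form estimate is sound, though you should state explicitly that the uniform bound on $\lambda_j(V_m)$ (hence on the form norms of the competing eigenfunctions) comes from testing the min-max on a fixed $j$-dimensional subspace together with the form bound, so there is no circularity.

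One caveat to make explicit: your exponent bookkeeping closes precisely when $4n/(n+\alpha)\le 2n/(n-2)$, i.e. $n\le 4+\alpha$ (the borderline case $n=4+\alpha$ being handled by your splitting $V=V_R+(V-V_R)$ with the tail small in $\Ll^{2n/(n-\alpha)}$), and \emph{not} on the whole stated range $3\le n\le 5$, $0<\alpha\le 2$: for $n=5$ and $\alpha<1$ the chain of inequalities fails. This is not a defect of your argument relative to the paper --- the paper's own Lemma \ref{lemma3.2} assumes $n<4+\alpha$, and Lemma \ref{lemma2.3} assumes $n\le 4+\alpha$ --- but your closing remark that the stated range is the sharp window ``with room to spare'' is inaccurate and should be replaced by the explicit condition $n\le 4+\alpha$.
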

Proposition \ref{secondprop} allows us to define the dual functional on $\dot\bH^{\alpha/2}(\R^n)$:
	\be\label{E=H} {\cal H}^{W,\alpha}_{\beta,a}(V):= \frac{a}{2}\left<V,V\right>_{\alpha/2}
	+ \sum_{j=1}^k \beta_j\lambda_j(V)
\ee
where $\lambda_1(V)<\lambda_2(V)\leq \lambda_3(V) \ldots \leq \lambda_k(V)$ are the lowest   $k$ eigenvalues of  the operator $L^W-aV$.  

\subsection{Main Theorem} \label{maintheorem}

%{\bf Theorem}\label{maintheorem} \ .
 {\em
For any $V\in \dot\bH_{\alpha/2}(\R^n)$ let $\phi_j(V)$ be a normalized  eigenstate corresponding to $\lambda_j(V)$
 \footnote{If $\lambda_j(V)$ is degenerate, so $\lambda_{j-1}(V)>  \lambda_j(V)= \ldots 
	 =\lambda_{j+l}(V) > \lambda_{j+l+1}(V)$,  then $\{\phi_j(V), \ldots \phi_{j+l}(V)\}$ is any orthonormal base of the eigenspace of $\lambda_j$. 
	}
	\begin{description}
	%	\item 
%	$$ {\cal H}^{W,\alpha}_{\beta,a}(V)\geq2 {\cal E}^{(2)}_{\beta,a} (\phi_1(V), \ldots \phi_k(V)) \ . $$ If there is an equality then 
%	$\{\lambda_j(V),  \phi_j(V)\}_{1\leq j\leq k}$  is a solution of (\ref{firsteq}) .  
 
	\item{[i]} $V$  is a minimizer of ${\cal H}^{W,\alpha}_{\beta,a}$  on $\dot\bH_{\alpha/2}(\R^n)$ if and only if $(\phi_1(V), \ldots \phi_k(V)) $ is a minimizer of ${\cal E}^{(\alpha)}_{\beta,a}$ on $\oplus^k\bH^1$.  If this is the case then 	$\{\lambda_j(V),  \phi_j(V)\}_{1\leq j\leq k}$  is a solution of (\ref{firsteq}) .

	%In this case  $(\bar\phi_1, \ldots \bar\phi_k)$  are solutions of (\ref{firsteq}) with $\lambda_j=\lambda_j(\bar V)$, while $\bar V=\sum_{i=1}^k\beta_iI_2 *|\bar\phi_i|^2 $. 
	\item{[ii]} If $\alpha\in(0,2]$  $3\leq n<2+\alpha$ then 
	 ${\cal H}^{W,\alpha}_{\beta,a}$ is bounded from below on $\dot\bH_{\alpha/2}(\R^n)$ for any $\beta$ satisfying (\ref{betadef}) and any $a>0$, and there is a minimizer of ${\cal H}^{W,\alpha}_{\beta,a}$ in $\dot\bH_{\alpha/2}(\R^n)$. 
	 %and a minimizer of ${\cal E}^{(2)}_{\beta,a}$ which constitute a steady state, (\ref{firsteq}).  
	\item{[iii]} If $\alpha=1, n=3$ or $\alpha=2, n=4$ then there exists $a_c(\beta)>0$ such that ${\cal H}^{W,\alpha}_{\beta,a}$ is bounded from below  on $\dot\bH_{\alpha/2}(\R^n)$ if $a<a_c(\beta)$ and unbounded from below if $a>a_c(\beta)$. 
If $a<a_c(\beta)$ there exists a minimizer of ${\cal H}^{W,\alpha}_{\beta,a}$ in $\dot\bH_{\alpha/2}(\R^n)$. 
\end{description}

%\end{{\bf Theorem}

}
 \section{Proofs}\label{proofs}
We start by proving Proposition \ref{propcritical}

%		 The {\em scalar}  Choquard's equation is the special case $k=1$. Here $\beta_1=1$ and $R=\psi(x,t)\psi(y,t)$ where $\psi$ is a pure stat^n}|\psi|^2(x) I_\alpha(x-y)|\psi|^2(y)dxdy . $$

\begin{proof}
	Let $\gamma_{i,j}$ be the Lagrange multiplier for the constraints $\left<\phi_i,\phi_j\right>=\delta_{i,j}$. Then $\overrightarrow{\phi}$ is an {\em unconstraint} critical point of $${\cal E}^{(\alpha)}_{\beta,a}(\vec{\phi}) + \sum_1^k\gamma_{j,j}\|\phi_j\|_{\bH^1}^2 + \sum_{i\not= j} \gamma_{i,j}\left<\phi_i, \phi_j\right> \ . $$ 
	This implies $$ 
	\frac{\delta {\cal E}^{(\alpha)}_{\beta,a}}{\delta\bar\phi_j }+ 2\gamma_{j,j}\bar\phi_j + \sum_{i\not= j }\gamma_{i,j} \bar\phi_i=
	\beta_j (L^W-a\bar V)\bar\phi_j + 2\gamma_{j,j}\bar\phi_j + \sum_{i\not= j }\gamma_{i,j} \bar\phi_i=
	0 \ . $$
	In particular, $Sp(\bar\phi_1\ldots\bar\phi_k)$ is an invariant subspace of $L^W-a\bar V$. 
	Since $(\bar\phi_1\ldots \bar\phi_k)$ is an orthonormal sequence we get, after multiplying the above line by $\bar\phi_i$ and taking the inner product: 
	$$ \left<\beta_j\left[L^W-a\bar V\right]\bar\phi_j, \bar\phi_i\right> +\gamma_{i,j}=0$$
	for any $i\not= j$. switching $i$ with $j$ and taking into consideration that $L+\bar V$ is self-adjoint, we also get
	$$ \left<\beta_i\left[L^W-a\bar V\right]\bar\phi_j, \bar\phi_i\right> +\gamma_{i,j}=0$$
	Subtracting the two inequalities we obtain $\left<(\beta_j-\beta_i)\left(L^W-a\bar V\right)\bar\phi_j, \bar\phi_i\right>=0$ thus  $\left<\left(L^W-a\bar V\right)\bar\phi_j, \bar\phi_i\right>=0$ for any $i\not= j$. 
	Since  $Sp(\bar\phi_1\ldots\bar\phi_k)$ is an invariant subspace of $L^W-a\bar V$,  this implies that 
	$\bar\phi_j$ is are eigenstates of $L^W-a\bar V$. 
\end{proof}
%Proposition \ref{propcritical} is, in fact, a formal statement since we did not defined, so far, the conditions on which ${\cal E}^{(\alpha)}_{\beta,a}$ is defined:

The first part of the  following Lemma follows from a compactness embedding Theorem (c.f.  Theorem XIII.67 in  \cite{BS}). The second part  from Sobolev and HLS inequalities (see, e.g. \cite{moroz2}, sec. 3.1.1)

\begin{lemma} \label{lemma2.1}
For any $n\geq 3$, 	$\bH^1$ is compactly embedded in $\Ll^r$ for $2<r<\frac{2n}{n-2}$. 
	If $\frac{n-2}{n+\alpha} \leq  \frac{1}{2} \leq \frac{n}{n+\alpha}$ and $\phi\in\bH^1$ then $|\phi|^2\in \Ll^{2n/(n+\alpha)}(\R^n)\cap \Ll^1(\R^n)$ and 
	$$\int_{\R^n} (I_\alpha*|\phi|^2)|\phi|^2\leq C_{n,\alpha} \left(\int|\phi|^{4n/(n+\alpha)}\right)^{1+\alpha/n}$$
	\end{lemma}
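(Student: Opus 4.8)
The plan is to prove Lemma \ref{lemma2.1} in two independent pieces. The first assertion, compact embedding of $\bH^1$ into $\Ll^r$ for $2<r<2n/(n-2)$, I would deduce from a standard weighted compactness argument: since $W$ satisfies (\ref{Winf}), i.e. $W\in\Ll^\infty_{loc}$ with $W\to\infty$ at infinity, the form domain $\bH^1$ sits inside $H^1_{loc}(\R^n)$ with uniformly bounded $H^1$-norm on balls, while the coercivity $\int W|\phi|^2<\infty$ forces tightness: given $\eps>0$ pick $R$ with $W\geq \eps^{-1}$ outside $B_R$, so $\int_{|x|>R}|\phi|^2\leq \eps \int W|\phi|^2\leq C\eps$. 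Bounded sequences in $\bH^1$ therefore (after passing to a subsequence) converge weakly in $H^1(\R^n)$, strongly in $\Ll^2_{loc}$ by Rellich on each ball, and the tightness upgrades this to strong $\Ll^2(\R^n)$ convergence; interpolating $\Ll^2$-strong convergence against the $\Ll^{2n/(n-2)}$ bound from the Sobolev inequality gives strong convergence in every intermediate $\Ll^r$, $2<r<2n/(n-2)$. This is exactly the content of Theorem XIII.67 in \cite{BS} cited in the paper, so I would simply invoke it; the only thing to check is that (\ref{Winf}) puts us in its hypotheses.

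For the second assertion I would argue as follows. By the Gagliardo–Nirenberg–Sobolev inequality $\bH^1\hookrightarrow \Ll^{2n/(n-2)}(\R^n)$ (valid for $n\geq3$), and trivially $\bH^1\subset\Ll^2(\R^n)$ since $\|\phi\|_2=1$; hence $|\phi|^2\in \Ll^1(\R^n)\cap \Ll^{n/(n-2)}(\R^n)$, and by interpolation $|\phi|^2\in\Ll^s(\R^n)$ for all $s\in[1,n/(n-2)]$. The condition $\frac{n-2}{n+\alpha}\leq\frac12\leq\frac{n}{n+\alpha}$ is precisely the statement that $s=\frac{2n}{n+\alpha}$ lies in this interval (the right inequality gives $s\geq1$, the left gives $s\leq n/(n-2)$), so $|\phi|^2\in\Ll^{2n/(n+\alpha)}(\R^n)$. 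Then I would apply the Hardy–Littlewood–Sobolev inequality: with $\rho=|\phi|^2$,
$$\int_{\R^n}(I_\alpha*\rho)\rho\,dx\leq C_{n,\alpha}\|\rho\|_{2n/(n+\alpha)}^2,$$
which is the dual/symmetric form of HLS with exponents $p=q=\frac{2n}{n+\alpha}$ and kernel exponent $\lambda=n-\alpha$ (the balance condition $\frac1p+\frac1q+\frac{\lambda}{n}=2$ holds). Unwinding $\|\rho\|_{2n/(n+\alpha)}^2=\left(\int|\phi|^{4n/(n+\alpha)}\right)^{(n+\alpha)/n}$ gives exactly the claimed bound with exponent $1+\alpha/n$.

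The main obstacle is not in any single estimate — both halves are classical — but in making sure the hypothesis chain is airtight: that the Friedrichs-type condition (\ref{Winf}) really does yield the tightness needed for the compact embedding (as opposed to merely discreteness of spectrum), and that the numeric exponent bookkeeping in HLS matches the stated form of the inequality (in particular that $\frac{2n}{n+\alpha}>1$, which is the right-hand hypothesis, is exactly what legitimizes the use of HLS, since HLS fails at the endpoint $p=1$). I would present the first part in a line or two by citing \cite{BS}, and spend the bulk of the written proof laying out the interpolation and the HLS application with the exponents spelled out, noting where each side of the double inequality on $\frac12$ is used.
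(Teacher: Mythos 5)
Your proposal is correct and follows essentially the same route as the paper, which itself only cites Theorem XIII.67 of \cite{BS} for the compact embedding and Sobolev plus Hardy--Littlewood--Sobolev inequalities (as in \cite{moroz2}, Sec.\ 3.1.1) for the second estimate; your exponent bookkeeping ($p=q=\tfrac{2n}{n+\alpha}$, $\lambda=n-\alpha$, and $\|\,|\phi|^2\|_{2n/(n+\alpha)}^2=(\int|\phi|^{4n/(n+\alpha)})^{1+\alpha/n}$) checks out. The only cosmetic remark is that strict admissibility $\tfrac{2n}{n+\alpha}>1$ in HLS comes from $\alpha<n$ rather than from the inequality $\tfrac12\leq\tfrac{n}{n+\alpha}$ alone, which only gives the non-strict bound.
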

	In particular we obtain:
	\begin{cor}\label{cor2.1}
	If $\max(0, n-4)\leq\alpha\leq n$ then 	the functional ${\cal E}^{(\alpha)}_{\beta,a}$ is defined on $\oplus^k\bH^1$. 
	\end{cor}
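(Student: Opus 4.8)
The plan is simply to show that every summand in the definition of ${\cal E}^{(\alpha)}_{\beta,a}$ is finite on an orthonormal frame $\vec\phi=(\phi_1,\ldots,\phi_k)\in\oplus^k\bH^1$; since all the sums in Definition \ref{def1.1} are finite (the indices $i,j$ range over $\{1,\ldots,k\}$), this yields ${\cal E}^{(\alpha)}_{\beta,a}(\vec\phi)\in\R$. There are two kinds of summand. The ``kinetic--potential'' terms $\left<\left<\phi_j,\phi_j\right>\right>_W=\int_{\R^n}\big(|\nabla\phi_j|^2+W|\phi_j|^2\big)$ are finite by the very definition of $\bH^1$, which demands $\nabla\phi_j\in\Ll^2$ and $\int_{\R^n}W|\phi_j|^2<\infty$ (and $W\geq 0$ by (\ref{Winf}), so there is no cancellation to worry about). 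So the whole matter reduces to the interaction terms $\left<|\phi_j|^2,I_\alpha*|\phi_i|^2\right>$.

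For these I would first dispose of the off-diagonal case $i\neq j$ by Cauchy--Schwarz. Writing $I_\alpha=(-\Delta)^{-\alpha/2}$ as in (\ref{IinvD}), the pairing $(f,g)\mapsto\left<f,I_\alpha*g\right>$ equals $\big\langle(-\Delta)^{-\alpha/4}f,(-\Delta)^{-\alpha/4}g\big\rangle_{\Ll^2}$, hence is a positive semidefinite symmetric bilinear form, so
$$\left|\left<|\phi_j|^2,I_\alpha*|\phi_i|^2\right>\right|\leq\left<|\phi_j|^2,I_\alpha*|\phi_j|^2\right>^{1/2}\left<|\phi_i|^2,I_\alpha*|\phi_i|^2\right>^{1/2}.$$
Thus it suffices to bound the diagonal quantity $\int_{\R^n}(I_\alpha*|\phi|^2)|\phi|^2$ for a single $\phi\in\bH^1$, and this is exactly the content of the second part of Lemma \ref{lemma2.1}, \emph{provided} its hypothesis $\frac{n-2}{n+\alpha}\leq\frac12\leq\frac{n}{n+\alpha}$ holds.

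The last step is the bookkeeping that identifies that hypothesis with the stated range of $\alpha$: $\frac12\leq\frac{n}{n+\alpha}\Leftrightarrow n+\alpha\leq 2n\Leftrightarrow\alpha\leq n$, and $\frac{n-2}{n+\alpha}\leq\frac12\Leftrightarrow 2(n-2)\leq n+\alpha\Leftrightarrow n-4\leq\alpha$; since also $\alpha>0$, this pair of inequalities is precisely $\max(0,n-4)\leq\alpha\leq n$. The same computation shows $\tfrac{4n}{n+\alpha}\in[2,\tfrac{2n}{n-2}]$, so the right-hand side of the inequality in Lemma \ref{lemma2.1} is finite: $\phi\in\Ll^{4n/(n+\alpha)}$ by interpolating the Sobolev embedding $\bH^1\hookrightarrow\Ll^{2n/(n-2)}$ with $\bH^1\subset\Ll^2$, using the embedding part of Lemma \ref{lemma2.1}. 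Summing finitely many finite terms finishes it. I do not expect a genuine obstacle here; the only points meriting an explicit word are the Cauchy--Schwarz reduction of the cross terms to the diagonal ones and the verification that $\max(0,n-4)\leq\alpha\leq n$ is exactly the admissible exponent window of Lemma \ref{lemma2.1}.
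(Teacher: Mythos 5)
Your proposal is correct and follows essentially the route the paper intends: the corollary is presented as an immediate consequence of Lemma \ref{lemma2.1}, and your contribution is just to make explicit the translation of the hypothesis $\frac{n-2}{n+\alpha}\leq\frac12\leq\frac{n}{n+\alpha}$ into $\max(0,n-4)\leq\alpha\leq n$, the finiteness of $\|\phi\|_{4n/(n+\alpha)}$ by Sobolev interpolation, and the Cauchy--Schwarz reduction of the cross terms $\left<|\phi_j|^2,I_\alpha*|\phi_i|^2\right>$ to the diagonal ones (the latter could equally be handled by applying HLS directly to the cross term, but your version is fine).
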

%From here thereafter we pose the physical significant parameter
%$$ \alpha=2 \ . $$
%The following Lemma is the corner stone of the duality formulation. It follows from a direct variational argument, using Definition \ref{BeLe}. 
%\begin{lemma}\label{2.1}
	Let  $\alpha\in(0,2]$, $\vec\phi\in \oplus^k \bH^1$ and $V\in C_0^\infty(\R^n)$. 
Define
\be\label{Hbetadef}{\bf  H}^{(\alpha)}_\beta(\vec\phi, V)=  \sum_1^k\beta_j \left<L^W\phi_j, \phi_j\right> +a\left[ \left<V, V\right>_{\alpha/2} -\left<V, \sum\beta_j|\phi_j|^2\right> \right]\  . 
\ee
%where $\vec\phi\in \oplus^k\bH^1$ and $V\in C_0^\infty(\R^n)$. 
By (\ref{Ivar}) we get (c.f Definition \ref{def1.1})
$$\inf_{V\in C_0^\infty(\R^n)} {\bf H}^{(\alpha)}_\beta(\vec\phi,V)=2 {\cal E}^{(\alpha)}_{\beta,a}(\vec\phi) \ . $$
Thus
$$\inf_{V\in C_0^\infty} \inf_{\vec\phi\in \oplus^k\bH^1}
{\bf H}^{(\alpha)}_\beta(\vec\phi,V)\equiv \inf_{\vec\phi\in \oplus^k\bH^1}\inf_{V\in C_0^\infty} 
{\bf H}^{(\alpha)}_\beta(\vec\phi,V)= \inf_{\vec\phi\in \oplus^k\bH^1}{\cal E}^{(\alpha)}_{\beta,a}(\vec\phi) \ . $$
Let
$$ {\cal H}^{W,\alpha}_{\beta,a}(V)=\inf_{\vec\phi\in \oplus^k\bH^1}
{\bf H}^{(\alpha)}_\beta(\vec\phi,V) \ . $$
From (\ref{Ivar} , \ref{Hbetadef}) we observe that 
$${\cal H}^{W,\alpha}_{\beta,a}(V)=\frac{a}{2}\left< V,V\right>_{\alpha/2}
+ \inf_{\vec{\phi}\in \oplus^k\bH^1}\sum_{j=1}^k \beta_j \left<(L^W-aV)\phi_j, \phi_j\right> \ . $$
Let 
	\be\label{minlin} \inf_{\vec{\phi}\in \oplus^k\bH^1}\sum_{j=1}^k \beta_j \left<(L^W-aV)\phi_j, \phi_j\right> := G_{\beta, a}(V)\ . \ee
	As the infimum over linear functionals,   $V\mapsto G_{\beta,a}(V)$ is a concave  functional, so
\be\label{Hdef} {\cal H}^{W,\alpha}_{\beta,a}(V)=\frac{a}{2}\left< V,V\right>_{\alpha/2}+G_{\beta, a}(V)\  \ee
is the sum of convex and concave functionals. 
In the case $k=1$ ($\vec\beta= \beta_1=1$) we observe, by the Rayleigh-Ritz principle $$G_{1,a}(V)= \inf_{\|\phi\|=1}\left<(L^W-aV)\phi,\phi\right>= \lambda_1(V)$$
 and the supremum is obtained at the normalized ground state $\bar\phi_1$ satisfying $(L^W-aV-\lambda_1)\bar\phi_1=0 $.  In particular we reassure that $G_{1,a}(V)=  \lambda_1(V)$ is a concave  functional. In general, higher eigenvalues $\lambda_j=\lambda_j(V)$ are {\em not} concave functions if $j>1$.  However, if  $\vec{\beta}:= (\beta_1, \ldots \beta_k)$ satisfies (\ref{betadef}) 
then we claim that $V\mapsto \sum_{j=1}^k \beta_j \lambda_j(V)$ is concave. Indeed:
\begin{lemma}\label{lemma2.2}
\be\label{Gdef} G_{\beta,a}(V)=\sum_{j=1}^k \beta_j \lambda_j(V)\  \ee
		where $\lambda_j(V)$ are the $k$ lowest  eigenvalues of the operator $L^W-aV$ arranged by order
		$$ \lambda_1(V)<\lambda_2(V)\leq \ldots \leq \lambda_k(V) \ . $$
	Moreover, the minimum  in (\ref{minlin}) is obtained at the  eigenfunction  $\bar\phi_j$ of  $L^W-aV$ corresponding to $\lambda_j$.

	\end{lemma}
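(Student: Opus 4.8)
The statement is the classical fact that the sum $\sum_{j=1}^k \beta_j \lambda_j$ of the lowest $k$ eigenvalues, weighted by a \emph{strictly decreasing} sequence $\beta_1 > \dots > \beta_k > 0$, equals the infimum of $\sum_j \beta_j \langle (L^W - aV)\phi_j, \phi_j\rangle$ over orthonormal $k$-frames, and is attained at the eigenfunctions. The plan is to prove the two inequalities separately. The ``$\le$'' direction (upper bound for $G_{\beta,a}$) is trivial: plug the orthonormal eigenfunctions $\bar\phi_1, \dots, \bar\phi_k$ of $L^W - aV$ into the functional; since $\langle (L^W - aV)\bar\phi_j, \bar\phi_j\rangle = \lambda_j$ and cross terms vanish by orthogonality, we get exactly $\sum_j \beta_j \lambda_j$. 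By Proposition \ref{firstprop} (or \ref{secondprop}) such a complete orthonormal eigenbasis exists, so this frame is admissible in $\oplus^k\bH^1$.

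\medskip

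The real content is the lower bound: for \emph{any} orthonormal frame $\vec\phi = (\phi_1,\dots,\phi_k)$ in $\oplus^k\bH^1$,
$$ \sum_{j=1}^k \beta_j \langle (L^W - aV)\phi_j, \phi_j\rangle \ \ge\ \sum_{j=1}^k \beta_j \lambda_j(V). $$
I would use the Abel summation (summation by parts) trick that converts weighted eigenvalue sums into a nested family of Ky Fan-type inequalities. Write $\beta_k \ge 0$ and $\beta_j - \beta_{j+1} > 0$ for $1 \le j \le k-1$, and decompose
$$ \sum_{j=1}^k \beta_j c_j \ =\ \sum_{m=1}^{k-1} (\beta_m - \beta_{m+1}) \Big( \sum_{j=1}^m c_j \Big) \ +\ \beta_k \Big( \sum_{j=1}^k c_j \Big), $$
applied both to $c_j = \langle (L^W - aV)\phi_j, \phi_j\rangle$ and to $c_j = \lambda_j(V)$. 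Because all coefficients $\beta_m - \beta_{m+1}$ and $\beta_k$ are nonnegative, it suffices to prove, for every $m \in \{1, \dots, k\}$, the Ky Fan minimum principle
$$ \sum_{j=1}^m \langle (L^W - aV)\phi_j, \phi_j\rangle \ \ge\ \sum_{j=1}^m \lambda_j(V), $$
valid for any orthonormal $m$-tuple $(\phi_1,\dots,\phi_m)$, with equality for the span of the first $m$ eigenfunctions. This is standard: let $P$ be the orthogonal projection onto $\mathrm{Sp}(\phi_1,\dots,\phi_m)$; then $\sum_{j=1}^m \langle (L^W - aV)\phi_j,\phi_j\rangle = \mathrm{Tr}(P(L^W - aV)P)$, which is the trace of the compression of $L^W - aV$ to an $m$-dimensional subspace; by the min–max characterization this trace is minimized, over all $m$-dimensional subspaces, by the spectral subspace of the $m$ lowest eigenvalues, giving $\sum_{j=1}^m \lambda_j$. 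One small point to address: $L^W - aV$ is unbounded below only in the sense that its form is bounded below (its spectrum is discrete and bounded below by Proposition \ref{firstprop}/\ref{secondprop}), so the quadratic-form version of Ky Fan's principle applies on $\bH^1$; I would phrase everything in terms of the closed semibounded form $\langle\langle\cdot,\cdot\rangle\rangle_W - a\langle V\cdot,\cdot\rangle$ to be safe, noting $V \in C_0^\infty$ so $-aV$ is a bounded (indeed relatively form-compact) perturbation.

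\medskip

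For the characterization of the minimizer: summing the $m$-term inequalities with positive weights, equality in the whole sum $\sum_j \beta_j c_j$ forces equality in \emph{every} Ky Fan inequality for which the weight $\beta_m - \beta_{m+1}$ is strictly positive — and here, by hypothesis \eqref{betadef}, \emph{all} of these weights are strictly positive (and $\beta_k > 0$). Equality in the $m$-th Ky Fan inequality for all $m = 1,\dots,k$ forces $\mathrm{Sp}(\phi_1,\dots,\phi_m)$ to be an invariant subspace of $L^W - aV$ for each $m$; intersecting these nested invariant subspaces shows each $\phi_j$ lies in the eigenspace of $\lambda_j(V)$. (This is exactly the place where strict monotonicity of $\beta$ is essential — if two consecutive $\beta$'s coincided, only the sum of the corresponding eigenspaces, not the individual eigenfunctions, would be pinned down, which is why Proposition \ref{propcritical} also needed $\beta_i \ne \beta_j$.) Concavity of $V \mapsto G_{\beta,a}(V) = \sum_j \beta_j \lambda_j(V)$ is then immediate from \eqref{minlin}: it is an infimum of the affine-in-$V$ maps $V \mapsto \sum_j \beta_j \langle (L^W - aV)\phi_j,\phi_j\rangle = \sum_j \beta_j\langle L^W\phi_j,\phi_j\rangle - a\langle V, \sum_j \beta_j|\phi_j|^2\rangle$.

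\medskip

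The main obstacle is not any single inequality — each is classical — but the bookkeeping: making the Abel-summation reduction clean, and being careful that the Ky Fan principle is invoked in its quadratic-form (not bounded-operator) incarnation since $L^W$ is unbounded. The strict-inequality hypothesis \eqref{betadef} must be used at exactly one point (the equality-case analysis), and I would flag that explicitly.
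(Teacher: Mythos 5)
Your proof is correct, but it follows a genuinely different route from the paper's. The paper proves the lower bound by restricting the infimum in (\ref{minlin}) to frames lying in the finite-dimensional span $\mathrm{Sp}(\bar\phi_1,\dots,\bar\phi_m)$ of eigenfunctions, expanding in that eigenbasis so that the weighted sum becomes $\sum_{i,j}\beta_j\lambda_i|\langle\phi_j,\bar\phi_i\rangle|^2$ with a doubly stochastic coefficient matrix, then invoking Krein--Milman and Birkhoff to reduce to permutation matrices and the Hardy--Littlewood--P\'olya rearrangement inequality to identify the identity permutation (descending $\beta$ against ascending $\lambda$) as optimal; it then needs an extra density/contradiction step to pass from frames contained in finite eigenspaces to arbitrary frames in $\oplus^k\bH^1$. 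You instead perform Abel summation on the strictly decreasing weights and reduce everything to the unweighted Ky Fan trace-minimum principle for each partial sum $\sum_{j\le m}\langle(L^W-aV)\phi_j,\phi_j\rangle\ge\sum_{j\le m}\lambda_j(V)$, applied directly to arbitrary orthonormal $m$-tuples in the form domain. What your route buys is that the finite-dimensional truncation and the density argument (the least rigorous part of the paper's proof) disappear entirely, and the equality-case analysis comes out cleanly, with the strict monotonicity (\ref{betadef}) used at exactly one identifiable spot; what it costs is that Ky Fan's principle in its quadratic-form incarnation is invoked as a black box, whereas the paper's argument is self-contained (indeed, Ky Fan is classically proved by essentially the doubly-stochastic/Birkhoff argument the paper spells out, so the two proofs are cousins: yours handles general decreasing weights by summation by parts, the paper handles them directly via rearrangement). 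Note also that the equality-case/invariant-subspace analysis you give is slightly more than the lemma asserts (attainment at eigenfunctions needs only your trivial upper bound plus the lower bound), though it is what justifies the differentiability statement in the paper's Corollary on the sup-gradient, and your closing concavity remark coincides with the paper's observation that $G_{\beta,a}$ is an infimum of functionals affine in $V$.
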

%From Lemma \ref{lemma2.2} we obtain at once that the functional $G_{\beta, a}$ is a concave one.
 Recall the definition of sup-gradient of a concave  functional $G$ on a vector space $C^\infty_0(\R^n)$ at $V$:
$$ \partial_V G:= \left\{ \zeta\in (C_0^\infty)^{'}; \ \ G(Z)\leq G(V)+ \left<Z-V, \zeta\right> \ \forall  Z\in C_0^\infty\right\}$$
while $G$ is differentiable at $V$ if $\partial_V G$ is a singleton. 
\begin{cor}\label{cor2.1}
	The sup-gradient of the functional  $G_{\beta, a}$ on $C_0^\infty(\R^n)$ is contained in $\Ll^1$. In fact 
	$\partial_VG_{\beta, a}=a \sum_{j=1}^k \beta_j|\bar\phi_j|^2$ where $\bar\phi_j\in\Ll^2$ is a  normalized  eigenstate of $L^W-aV$ corresponding to the $j-$ eigenvalue.  So, in particular, $\|\partial_VG_{\beta, a}\|_1=a$. If all eigenvalues of $L^W-aV$ are simple then $G_{\beta,a}$ is differentiable at $V$.
	% If, in addition, $3\leq n\leq 5$ then $\partial_VG_{\beta,a}\subset \Ll^{n/(n-2)}$. 
\end{cor}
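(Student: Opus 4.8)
The plan is to extract the superdifferential directly from the variational representation (\ref{minlin}). For $\vec\phi=(\phi_1,\dots,\phi_k)\in\oplus^k\bH^1$ set
$$\ell_{\vec\phi}(V):=\sum_{j=1}^k\beta_j\left<L^W\phi_j,\phi_j\right>-a\left<V,\rho_{\vec\phi}\right>,\qquad \rho_{\vec\phi}:=\sum_{j=1}^k\beta_j|\phi_j|^2 ,$$
so that $G_{\beta,a}=\inf_{\vec\phi}\ell_{\vec\phi}$ is an infimum of affine functionals of $V$ (re-confirming the concavity of $G_{\beta,a}$ noted before (\ref{Hdef})), and the linear part of each $\ell_{\vec\phi}$ has ``slope'' $-a\rho_{\vec\phi}$. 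Since $\phi_j\in\Ll^2(\R^n)$ with $\|\phi_j\|_2=1$, each $|\phi_j|^2$ is a probability density, so $\rho_{\vec\phi}\ge 0$ lies in $\Ll^1(\R^n)$ with $\|\rho_{\vec\phi}\|_1=\sum_j\beta_j=1$ by (\ref{normalbeta}); in particular $\|-a\rho_{\vec\phi}\|_1=a$.

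First I would prove the inclusion ``$\supseteq$''. By Lemma \ref{lemma2.2} the infimum in (\ref{minlin}) is attained, and a frame $\vec{\bar\phi}$ attains it precisely when $\bar\phi_1,\dots,\bar\phi_k$ are normalized eigenstates of $L^W-aV$ associated, in order, with its $k$ lowest eigenvalues. For any such $\vec{\bar\phi}$ and any $Z\in C_0^\infty(\R^n)$, using $\vec{\bar\phi}$ as a test frame in the variational problem that defines $G_{\beta,a}(Z)$,
$$G_{\beta,a}(Z)\le \ell_{\vec{\bar\phi}}(Z)=\ell_{\vec{\bar\phi}}(V)+\left<Z-V,-a\rho_{\vec{\bar\phi}}\right>=G_{\beta,a}(V)+\left<Z-V,-a\rho_{\vec{\bar\phi}}\right> ,$$
so $-a\rho_{\vec{\bar\phi}}=-a\sum_j\beta_j|\bar\phi_j|^2\in\partial_V G_{\beta,a}$, and hence so is every element of the closed convex hull ${\cal C}$ of these functions as $\vec{\bar\phi}$ ranges over the admissible eigenframes. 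That family is a continuous image of a compact set (a unitary group, or Stiefel manifold, of the relevant eigenspaces), hence compact in $\Ll^1(\R^n)$, so by Mazur's theorem ${\cal C}$ is itself a compact subset of $\Ll^1(\R^n)$. Moreover every point of ${\cal C}$ is $-a$ times a probability density, so it lies in $\Ll^1$ with $\Ll^1$-norm exactly $a$ (a function of constant sign has $\Ll^1$-norm equal to the modulus of its integral, and a convex combination of densities is a density). When all eigenvalues of $L^W-aV$ are simple the eigenstates are unique up to phase, so ${\cal C}=\{-a\sum_j\beta_j|\bar\phi_j|^2\}$ is a single point.

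Next I would prove the matching inclusion $\partial_V G_{\beta,a}\subseteq{\cal C}$, which upgrades the previous paragraph to the stated equality, in particular to the $\Ll^1$ containment, and -- with the last remark above -- to differentiability in the simple-spectrum case. Fix $\zeta\in\partial_V G_{\beta,a}$ and $\psi\in C_0^\infty(\R^n)$; evaluating the defining inequality at $V\pm t\psi$ and dividing by $t>0$ sandwiches $\left<\psi,\zeta\right>$ between the two one-sided directional derivatives of $t\mapsto G_{\beta,a}(V+t\psi)=\sum_j\beta_j\lambda_j(V+t\psi)$ at $t=0$. These I would compute via analytic perturbation theory for $L^W-aV-at\psi$: when $\lambda_k(V)<\lambda_{k+1}(V)$ the total spectral projection on the $k$ lowest levels is analytic in $t$ near $0$, and a first-order computation -- diagonalizing the perturbation $-a\psi$ within each degenerate block in the appropriate one-sided order, and using $\beta_1>\dots>\beta_k>0$ -- shows that each one-sided derivative has the form $\left<\psi,-a\rho_{\vec{\bar\phi}}\right>$ for a suitable admissible eigenframe $\vec{\bar\phi}$. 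Thus $\left<\psi,\zeta\right>$ lies between the minimum and the maximum of $\left<\psi,-a\rho_{\vec{\bar\phi}}\right>$ over the admissible $\vec{\bar\phi}$, for every $\psi\in C_0^\infty(\R^n)$, and a Hahn--Banach separation in $(C_0^\infty)'$ forces $\zeta\in{\cal C}$. The borderline case $\lambda_k(V)=\lambda_{k+1}(V)$, if one insists on including it, follows by approximating $V$ by potentials with a spectral gap at level $k$ and passing to the limit, using continuity of the $\lambda_j$ (Proposition \ref{secondprop}) and of the spectral projections, together with the compactness of ${\cal C}$. The main obstacle is precisely this last step: making the Danskin-type one-sided derivative formula rigorous in the infinite-dimensional setting, and above all controlling the genuine non-smoothness of $\sum_j\beta_j\lambda_j$ at eigenvalue crossings -- this is exactly where the strict ordering $\beta_1>\dots>\beta_k>0$ and the first-order splitting of degenerate eigenvalues have to be used; the affine representation, the identity $\|\rho_{\vec\phi}\|_1=1$, and the inclusion ``$\supseteq$'' are routine once Lemma \ref{lemma2.2} is granted.
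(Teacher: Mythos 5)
Your first two paragraphs reproduce, in essence, the paper's own argument: the corollary carries no separate proof and is meant to fall out of Lemma \ref{lemma2.2} --- once the infimum in (\ref{minlin}) is known to be attained at an eigenframe $\vec{\bar\phi}$, the attaining affine functional $\ell_{\vec{\bar\phi}}$ touches $G_{\beta,a}$ from above at $V$, so its slope is a supergradient, and $\|\rho_{\vec{\bar\phi}}\|_1=\sum_j\beta_j=1$ gives the $\Ll^1$ bound. (One small discrepancy: the slope of $Z\mapsto\sum_j\beta_j\left<(L^W-aZ)\bar\phi_j,\bar\phi_j\right>$ is $-a\sum_j\beta_j|\bar\phi_j|^2$; the statement carries the opposite sign, which is either a typo or a sign convention for supergradients of concave functionals --- your sign is the internally consistent one.) Where you go beyond the paper is the converse inclusion $\partial_V G_{\beta,a}\subseteq{\cal C}$, which is genuinely needed for the phrase ``in fact $\partial_V G_{\beta,a}=\cdots$'' and for the differentiability claim, and which the paper does not address. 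Your Danskin-type program --- one-sided directional derivatives via first-order perturbation theory, the strict ordering $\beta_1>\cdots>\beta_k$ to resolve degenerate blocks, then separation against the support function of the compact convex set ${\cal C}$ --- is the right route, but, as you say yourself, it is left as a sketch: the identification of the one-sided derivatives with the extrema of $\left<\psi,-a\rho_{\vec{\bar\phi}}\right>$ over attaining frames, and the separation step starting from a $\zeta$ that is a priori only a distribution (so that one must first upgrade it to a finite measure via the bound $|\left<\psi,\zeta\right>|\leq a\|\psi\|_\infty$ before invoking weak-$*$ compactness of ${\cal C}$), both still need to be written out. Since the paper only ever uses the easy inclusion (to read off the Euler--Lagrange equation feeding Corollary \ref{cor2.3}), your proposal is at least as complete as the source; the residual gap concerns the statement as literally written, not your treatment of what the paper actually establishes.
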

%For the last statement of this Corollary see Lemma \ref{lemma2.3} below. 
	\begin{proof}
		
		Since $V\mapsto \sum_{j=1}^k \beta_j \left<(L^W-aV)\phi_j, \phi_j\right>$  is a linear functional, (\ref{minlin}) would imply, in particular, that the functional $G_{\beta,a}$ is, indeed,  a concave  one.
		
		Let $\bar\phi_j$ be the normalized eigenvalues of $L^W-aV$ corresponding to $\lambda_j(V)$. 
		Fix some $m\geq j$ and let $\bH_m=  Sp(\bar\phi_1, \ldots \bar\phi_m)$. Let us restrict the supremum  (\ref{minlin}) to $\bH^k_m:=\{ \vec\phi:= (\phi_1, \ldots \phi_k), \phi_j\in\bH_m\}\subset \bH^k$.   
		
		Then
		$$ \phi_j= \sum_{i=1}^m\left<\phi_j, \bar\phi_i\right>\bar\phi_i , \ \ \ (L^W-aV)\phi_j= \sum_{i=1}^m\lambda_i\left<\phi_j, \bar\phi_i\right>\bar\phi_i \ \ . $$
	Define $\beta_{k+1}= \ldots =\beta_m=0$. Then we can write, for any $\vec\phi\in \bH^k_m$ 
		\be\label{beta=gamma}\sum_{j=1}^k \beta_j \left<(L^W-aV)\phi_j, \phi_j\right> =\sum_{i=1}^m\sum_{j=1}^m \beta_j\lambda_i|\left<\phi_j, \bar\phi_i\right>|^2 \ . \ee
		Denote now $\gamma_{i,j}:= |\left<\phi_j, \bar\phi_i\right>|^2$. Then $\{\gamma_{i,j}\}$ is $m\times m$, bi-stochastic matrix, i.e $\sum_{i=1}^m \gamma_{i,j}=\sum_{j=1}^m\gamma_{i,j}=1$ for all $i,j=1\ldots m$.  Consider now the infimum  of $\sum_{i=1}^m\sum_{j=1}^m \tilde\gamma_{i,j}\lambda_i\beta_j$ over all bi-stochastic martices $\{\tilde\gamma_{i,j}\}$.  By Krain-Milman theorem, the minimum is obtained on an extreme point in the convex set of bi- stochastic matrices. By Birkhoff theorem, the extreme points are permutations so, from(\ref{beta=gamma}) 
		$$ \forall \vec\phi\in \bH^k_m, \ \ \  \sum_{j=1}^k \beta_j \left<(L^W-aV)\phi_j, \phi_j\right>\geq \sum_{j=1}^m \beta_{\pi(j)} \lambda_j$$
		for some permutation $\pi:\{1, \ldots m\}\mapsto\{1, \ldots m\}$.  Now, recall that $\beta_j$ are assumed to be in descending order while $\lambda_j$ are in ascending  order by definition. By the discrete rearangment theorem of Hardy, Littelwood and Polya \cite{Polya} we obtain that the maximum on the right above is obrained at the identity permutation $\pi(i)=i$,  that is, at the identity matrix $\tilde\gamma_{i,j}:=\left<\phi_j,\bar\phi_i\right>=\delta_{i,j} $.  This implies thet the eigenbasis $\bar\phi_1, \ldots \bar\phi_k$ of the $k$ leading eigenvalues is the minimizer of (\ref{minlin}) on $\bH_m^k$ for any $m\geq k$.

	In particular,  the minimizer of (\ref{minlin}) in $\bH_m^k$ is independent of $m$, as long as $m\geq k$.  Suppose there exists some $\vec\psi\in \bH^k$ which is not contained in and finite dimensional subspace generated by eigenstates, for which (\ref{minlin}) is strictly smaller than its value on the first $k-$ leading eigenspace. Since the eigenstates of the Schrödinger operator under assumption (\ref{Winf}) generate the whole space we can find, for a sufficiently large $m$,  an orthonormal base in $\bH_m^k$ for on which the left side of (\ref{minlin}) is strictly larger than $\sum_{j=1}^k\beta_j\lambda_j(V)$, and we get a contradiction for this value of $m$. 
	
	\end{proof}
From Corollary \ref{cor2.1} and (\ref{ELalpha}, \ref{IinvD})  
	It follows that the Euler-Lagrange equation corresponding to ${\cal H}^{W,\alpha}_{\beta,a}$ is 
	$$ (-\Delta)^{\alpha/2} V - \sum_{j=1}^k\beta_j|\phi_j|^2=0   \ \ \ \Longleftrightarrow   V=I_\alpha*(
\sum_{j=1}^k\beta_j|\phi_j|^2	)$$
	where $\phi_j$  are the normalized eigenfunction corresponding to $\lambda_j( V)$. In particular we obtain
the proof of Theorem \ref{maintheorem}-(i): 
\begin{cor}\label{cor2.3}
	If $\bar V$ is a minimizer of ${\cal H}^{W,\alpha}_{\beta,a}$ then $\bar V=\sum_{j=1}^k\beta_j I_\alpha*|\bar\phi_j|^2$ where $\bar\phi_j$ are the normalized eigenfunction corresponding to $\lambda_j(\bar V)$. In particular, $\{\lambda_j(\bar V), \bar\phi_j\}$  is a solution of the Choquard system (\ref{firsteq}, \ref{rie}). 
	\end{cor}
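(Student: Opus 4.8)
The plan is to obtain Corollary \ref{cor2.3} as the Euler--Lagrange equation for ${\cal H}^{W,\alpha}_{\beta,a}$ written just above, made rigorous at a minimizer, followed by an inversion of the fractional Laplacian. Concretely: first show that a minimizer $\bar V$ must satisfy $(-\Delta)^{\alpha/2}\bar V=\sum_{j=1}^k\beta_j|\bar\phi_j|^2$, where $\bar\phi_j$ is a normalized eigenstate of $L^W-a\bar V$ for the eigenvalue $\lambda_j(\bar V)$; then apply $I_\alpha$ and use (\ref{IinvD}) to get $\bar V=I_\alpha*\sum_{j=1}^k\beta_j|\bar\phi_j|^2=\sum_{j=1}^k\beta_jI_\alpha*|\bar\phi_j|^2$, which is exactly relation (\ref{rie}) with $V=\bar V$; and finally observe that the eigenvalue identities $(L^W-a\bar V)\bar\phi_j-\lambda_j(\bar V)\bar\phi_j=0$ are precisely (\ref{firsteq}), so that the pair $\{\lambda_j(\bar V),\bar\phi_j\}_{1\le j\le k}$ is a solution of the Choquard system (\ref{firsteq}, \ref{rie}).

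The one genuinely careful point is to justify the Euler--Lagrange equation at a minimizer, because ${\cal H}^{W,\alpha}_{\beta,a}(V)=\frac{a}{2}\langle V,V\rangle_{\alpha/2}+G_{\beta,a}(V)$ is a sum of a strictly convex, differentiable quadratic form (gradient $a(-\Delta)^{\alpha/2}V$ in the sense of (\ref{ELalpha})) and a merely concave functional $G_{\beta,a}$, which need not be differentiable. I would argue: working first with test directions $Z\in C_0^\infty(\R^n)$ as in Corollary \ref{cor2.1} and passing to all of $\dot\bH^{\alpha/2}(\R^n)$ by density of $C_0^\infty$ and the continuity of $V\mapsto\lambda_j(V)$ from Proposition \ref{secondprop}, minimality of $\bar V$ gives $a\langle(-\Delta)^{\alpha/2}\bar V,Z\rangle+\langle\zeta,Z\rangle\ge 0$ for every $Z$ and every super-gradient $\zeta\in\partial_{\bar V}G_{\beta,a}$ (using that a continuous concave functional has directional derivative $\inf_{\zeta\in\partial_{\bar V}G_{\beta,a}}\langle\zeta,Z\rangle$); applying this also to $-Z$ forces $\zeta=-a(-\Delta)^{\alpha/2}\bar V$ for every such $\zeta$, so $\partial_{\bar V}G_{\beta,a}$ is a singleton. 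By the super-gradient computation of Corollary \ref{cor2.1}, which identifies the elements of $\partial_{\bar V}G_{\beta,a}$ with the density built from the eigenstates, this singleton relation is exactly $(-\Delta)^{\alpha/2}\bar V=\sum_{j=1}^k\beta_j|\bar\phi_j|^2$. Moreover, since $\beta_1>\dots>\beta_k>0$ by (\ref{betadef}), any degeneracy among $\lambda_1(\bar V),\dots,\lambda_k(\bar V)$ — or a coincidence $\lambda_k(\bar V)=\lambda_{k+1}(\bar V)$ — would make $\sum_j\beta_j|\bar\phi_j|^2$ depend on the orthonormal choice inside the degenerate block and hence contradict the singleton property; so in fact the first $k$ eigenvalues of $L^W-a\bar V$ are simple and the $\bar\phi_j$ are determined up to phase. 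This degeneracy/convex-plus-concave analysis is the step I expect to be the main obstacle; everything else is routine.

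It then remains to invert and to check that all objects are well defined. Applying $I_\alpha$ to $(-\Delta)^{\alpha/2}\bar V=\sum_j\beta_j|\bar\phi_j|^2$ and using (\ref{IinvD}) gives $\bar V=\sum_{j=1}^k\beta_jI_\alpha*|\bar\phi_j|^2$; this lies in $\Ll^{2n/(n-\alpha)}(\R^n)$ because each $\bar\phi_j\in\bH^1$ forces $|\bar\phi_j|^2\in\Ll^{2n/(n+\alpha)}(\R^n)\cap\Ll^1(\R^n)$ by Lemma \ref{lemma2.1}, so the Riesz mapping property of Lemma \ref{lemapqrietz} applies, consistently with the embedding $\dot\bH^{\alpha/2}\hookrightarrow\Ll^{2n/(n-\alpha)}$ of Lemma \ref{critsob}. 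This identity is relation (\ref{rie}) with $V=\bar V$, and together with $(L^W-a\bar V)\bar\phi_j=\lambda_j(\bar V)\bar\phi_j$ it shows that $\{\lambda_j(\bar V),\bar\phi_j\}_{1\le j\le k}$ solves (\ref{firsteq}, \ref{rie}), which completes the proof.
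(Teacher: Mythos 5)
Your proposal is correct and follows essentially the same route as the paper: the Euler--Lagrange equation $(-\Delta)^{\alpha/2}\bar V=\sum_j\beta_j|\bar\phi_j|^2$ obtained from the sup-gradient of the concave part $G_{\beta,a}$ (Corollary \ref{cor2.1}) together with the gradient of the quadratic form (\ref{ELalpha}), followed by inversion via (\ref{IinvD}). The paper states this in two lines; you merely supply the convex-plus-concave justification (that minimality forces the super-differential of $G_{\beta,a}$ at $\bar V$ to collapse to $-a(-\Delta)^{\alpha/2}\bar V$) that the paper leaves implicit, which is a welcome but not divergent elaboration.
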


\begin{lemma}\label{lemma2.3}
	Suppose $\alpha\in (0,2]$, $3\leq n\leq 4+\alpha$ and  $V\in C_0^\infty(\R^n)$  is bounded in $\dot\bH^{\alpha/2}$. If $\phi$ is a normalized eigenfunction of $a^{-1}L^W-V$ then $\|\nabla \phi\|_2$ , $\int W|\phi|^2dx$ 
	 and $\|\phi\|_{2n/(n-2)}$ are  bounded in terms of $\|| V\||_{\alpha/2}$ and the corresponding eigenvalue $\lambda$.   
	
%	If $\lambda_V$ is an eigenfunction of $$a^{-1}L^W-V$ 
\end{lemma}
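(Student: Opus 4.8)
The plan is to test the eigenvalue equation against $\phi$, reduce the whole statement to an upper bound on $a\int_{\R^n}V|\phi|^2$, and then absorb that term into $\|\nabla\phi\|_2^2$ using the embedding $\dot\bH^{\alpha/2}\hookrightarrow\Ll^{2n/(n-\alpha)}$ (Lemma \ref{critsob}) together with an interpolation inequality. Writing the equation $(a^{-1}L^W-V)\phi=\lambda\phi$ as $-\Delta\phi+W\phi-aV\phi=a\lambda\phi$ and pairing with $\bar\phi$ (legitimate for $\phi$ in the form domain, so that a priori $\nabla\phi\in\Ll^2$ and $\int W|\phi|^2<\infty$), the normalization $\|\phi\|_2=1$ yields the energy identity
$$\|\nabla\phi\|_2^2+\int_{\R^n}W|\phi|^2=a\lambda+a\int_{\R^n}V|\phi|^2 .$$
Since $W\ge 0$ by (\ref{Winf}), the left side dominates both $\|\nabla\phi\|_2^2$ and $\int W|\phi|^2$, and once $\|\nabla\phi\|_2$ is controlled the bound on $\|\phi\|_{2n/(n-2)}$ follows from the Sobolev inequality $\|\phi\|_{2n/(n-2)}\le S_{n,2}\|\nabla\phi\|_2$ (the case $\alpha=2$ of Lemma \ref{critsob}). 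So everything reduces to bounding $a\int V|\phi|^2$ from above by $\tfrac12\|\nabla\phi\|_2^2$ plus a quantity depending only on $n,\alpha,a,\lambda$ and $\|| V\||_{\alpha/2}$.

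For the core estimate, Hölder's inequality with the conjugate pair $\tfrac{n-\alpha}{2n}+\tfrac{n+\alpha}{2n}=1$ gives
$$\int_{\R^n}V|\phi|^2\ \le\ \|V\|_{2n/(n-\alpha)}\,\big\||\phi|^2\big\|_{2n/(n+\alpha)}\ =\ \|V\|_{2n/(n-\alpha)}\,\|\phi\|_{4n/(n+\alpha)}^2 ,$$
and Lemma \ref{critsob} bounds $\|V\|_{2n/(n-\alpha)}\le S_{n,\alpha}\|| V\||_{\alpha/2}$. It remains to control $\|\phi\|_{4n/(n+\alpha)}$: one checks the two-sided inequality $2\le\tfrac{4n}{n+\alpha}\le\tfrac{2n}{n-2}$ — the left being $n\ge\alpha$, the right being exactly the hypothesis $n\le 4+\alpha$ — so $\Ll^{4n/(n+\alpha)}$ interpolates between $\Ll^2$ and the Sobolev space $\Ll^{2n/(n-2)}$ with weight $\theta=\tfrac{n-\alpha}{4}\in[0,1]$, whence $\|\phi\|_{4n/(n+\alpha)}\le\|\phi\|_2^{1-\theta}\|\phi\|_{2n/(n-2)}^{\theta}\le S_{n,2}^{\theta}\|\nabla\phi\|_2^{\theta}$. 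Combining, $a\int V|\phi|^2\le a\,S_{n,\alpha}S_{n,2}^{2\theta}\,\|| V\||_{\alpha/2}\,\|\nabla\phi\|_2^{2\theta}$ with exponent $2\theta=\tfrac{n-\alpha}{2}$.

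In the range $3\le n<4+\alpha$ one has $2\theta<2$, so Young's inequality applied to $\|\nabla\phi\|_2^{2\theta}$ gives $a\int V|\phi|^2\le\tfrac12\|\nabla\phi\|_2^2+C$ with $C=C(n,\alpha,a,\|| V\||_{\alpha/2})$; feeding this into the energy identity and using $W\ge0$ produces $\tfrac12\|\nabla\phi\|_2^2+\int W|\phi|^2\le a\lambda+C$, the asserted bound (with $\|\phi\|_{2n/(n-2)}$ then controlled by Sobolev). The step I expect to be the real obstacle is the borderline $n=4+\alpha$, where $\theta=1$, the Sobolev exponent appears with the critical power $2\theta=2$, and naive absorption fails because the constant $a\,S_{n,\alpha}S_{n,2}^2\,\|| V\||_{\alpha/2}$ need not be $<1$. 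I would resolve this by truncating $V=V\,\mathbf 1_{\{|V|\le M\}}+V\,\mathbf 1_{\{|V|>M\}}$: the bounded part contributes at most $M\|\phi\|_2^2=M$, while here $2n/(n-\alpha)=n/2$, so $V\in\Ll^{n/2}$ and $\|V\,\mathbf 1_{\{|V|>M\}}\|_{n/2}\to0$ as $M\to\infty$; choosing $M$ large makes the tail term, estimated by $\|V\,\mathbf 1_{\{|V|>M\}}\|_{n/2}S_{n,2}^2\|\nabla\phi\|_2^2$, absorbable into $\tfrac12\|\nabla\phi\|_2^2$. Note that all parameter ranges actually invoked in the Main Theorem ($3\le n<2+\alpha$ in part [ii], and $(\alpha,n)=(1,3)$ or $(2,4)$ in part [iii]) satisfy $n<4+\alpha$ strictly, so the clean argument of the previous paragraph applies there without any truncation.
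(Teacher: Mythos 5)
Your proof is correct and follows essentially the same route as the paper: the energy identity from pairing the eigenvalue equation with $\phi$, Hölder with exponents $\tfrac{2n}{n-\alpha},\tfrac{2n}{n+\alpha}$ plus the critical Sobolev embedding for $V$, and control of $\|\phi\|_{4n/(n+\alpha)}$ by $\|\nabla\phi\|_2^{2\theta}$ with $\theta=(n-\alpha)/4$ (the paper cites Gagliardo--Nirenberg where you interpolate between $\Ll^2$ and $\Ll^{2n/(n-2)}$, which is the same inequality), followed by absorption since $2\theta<2$. Your extra truncation argument for the endpoint $n=4+\alpha$ goes beyond the paper (whose proof, like yours in the main case, really only covers $n<4+\alpha$), though note that the truncation level $M$ there depends on the individual $V$ and not merely on $\||V\||_{\alpha/2}$, so it does not give a bound uniform over bounded sets; as you observe, that endpoint is never used in the main theorem.
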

\begin{proof}
	By assumption, $\|\phi\|_2=1$ and satisfy
	$$ (-\Delta \phi+W)\phi -aV\phi-\lambda \phi =0\ . $$
	Multiply by $\phi$ and integrate to obtain
	\be\label{eigenfunc} \|\nabla \phi\|_2^2 -a\int V|\phi|^2 dx+\int W|\phi|^2dx - \lambda=0 \ . \ee

	By the critical Sobolev inequality (Lemma \ref{critsob}) and and Holder inequality
	\be\label{cSH} \int V|\phi|^2dx \leq \|V\|_{\frac{2n}{n-\alpha}}\|\phi|^2\|_{\frac{2n}{n+\alpha}}\leq S_{n,\alpha} \||V\||_{\alpha/2}^2\|\phi|^2\|_{\frac{2n}{n+\alpha}}\ee
 By the Gagliardo-Nirenberg interpolation inequality
	 \cite{NL}
	 $$ \|\phi\|_p \leq C(\theta)\|\nabla \phi\|_2^\theta \|\phi\|_2^{1-\theta}$$
	 	where $C(\theta)$ is independent of $\phi$, $p=2n/(n-2\theta)$ whenever $\theta\in[0,1]$.  Since $\|\phi\|_2=1$ we get
	 	\be\label{ugradu} \||\phi|^2\|_{p/2}= \|\phi\|^2_p\leq C^2(\theta)\| \nabla \phi\|_2^{2\theta}\ . \ee
	 	Let now $p/2=\frac{2n}{n+\alpha}$, corresponding to $\theta=(n-\alpha)/4$, we obtain from (\ref{cSH})
	 	$$  \int V|\phi|^2dx \leq S_{n,\alpha} C^2(\theta) \||V\||_{\alpha/2}^2 \|\nabla \phi\|_2^{2\theta} \ $$
	 	where $\theta <1$ if $3\leq n< 4+\alpha$. Substitute it in (\ref{eigenfunc}) we obtain the upper estimate on $\|\nabla \phi\|_2$ and $\int W|\phi|^2dx$.  Finally setting $\theta=1$ corresponding to $p=2n/(n-2)$ we obtain from (\ref{ugradu}) the estimate on $\|\phi\|_{2n/(n-2)}$. 
 \end{proof}
\begin{lemma}\label{lemma3.2} If $2<n<4+\alpha$, $0<\alpha\leq 2$ then 
	$C_0^\infty(\R^n)\ni V\mapsto \lambda_j^{(V)}$ is  continuous on  bounded sets in $\dot\bH_{\alpha/2}(\R^n)$    with respect to Lebesgue norms  $\Ll^{q}(\R^n)$, where $n/2\leq q\leq \infty$. In particular $V\mapsto\lambda_j^{(V)}$ can be extended as a continuous function on $\dot\bH_{\alpha/2}$.
	
\end{lemma}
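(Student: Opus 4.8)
The plan is to base the argument on the Courant--Fischer min-max formula for the eigenvalues of $L^W-aV$,
\[
\lambda_j(V)=\min_{\substack{E\subset\bH^1\\ \dim E=j}}\ \ \max_{\substack{\phi\in E\\ \|\phi\|_2=1}}\ \Big(\left<\left<\phi,\phi\right>\right>_W-a\int_{\R^n}V|\phi|^2\Big),
\]
which holds for $V\in C_0^\infty(\R^n)$ by Proposition \ref{firstprop}. Fix $j$, a bounded set $\mathcal B\subset\dot\bH^{\alpha/2}(\R^n)$, an exponent $q\in[n/2,\infty]$, and a sequence $V_m\to V$ in $\Ll^q(\R^n)$ with all $V_m$ and $V$ lying in $\mathcal B\cap C_0^\infty(\R^n)$. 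I claim the whole statement reduces to the single estimate
\[
\Big|\int_{\R^n}(V_m-V)\,|\phi|^2\Big|\ \le\ \eta_m\big(1+\|\nabla\phi\|_2^2\big)\quad\text{when }\|\phi\|_2=1,\qquad\text{with }\eta_m\to0,\qquad(\star)
\]
together with a bound on $\|\nabla\phi\|_2$, uniform in $m$, for the particular trial subspaces entering the two halves of the min-max.

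For $(\star)$: since $\mathcal B$ is bounded in $\dot\bH^{\alpha/2}$, Lemma \ref{critsob} makes $\{V_m-V\}$ bounded in $\Ll^s(\R^n)$ with $s:=2n/(n-\alpha)$, and $n<4+\alpha$ forces $s\in(n/2,\infty)$. Interpolating this $\Ll^s$-bound against $\|V_m-V\|_q\to0$ gives $\|V_m-V\|_r\to0$ for some $r\in(n/2,\infty)$ (one may take $r=q$ when $q\in(n/2,\infty)$, and interpolate towards $s$ when $q=n/2$ or $q=\infty$). For such $r$, Hölder gives $\int|V_m-V|\,|\phi|^2\le\|V_m-V\|_r\,\big\||\phi|^2\big\|_{r'}$, and the Gagliardo--Nirenberg inequality used in the proof of Lemma \ref{lemma2.3} with $\theta=n/(2r)\in(0,1)$ gives, for $\|\phi\|_2=1$, $\big\||\phi|^2\big\|_{r'}=\|\phi\|_{2r'}^2\le C\|\nabla\phi\|_2^{2\theta}\le C\big(1+\|\nabla\phi\|_2^2\big)$, which is $(\star)$ with $\eta_m=C\|V_m-V\|_r$. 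Running the same chain with $r=s$ (so $\theta=(n-\alpha)/4\in(0,1)$) on $V_m$ itself, using the uniform $\Ll^s$-bound, gives $a\int V_m|\phi|^2\le\tfrac12\|\nabla\phi\|_2^2+C'$ for all $m$; since $W\ge0$ this yields $\lambda_1(V_m)\ge-C'$, and together with the upper bound found below it keeps every $\lambda_j(V_m)$ in a fixed bounded interval.

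Now estimate $\lambda_j$ from both sides. Upper bound: let $E$ be the span of the first $j$ eigenfunctions of $L^W-aV$, a fixed finite-dimensional subspace with $M:=\sup\{\|\nabla\phi\|_2^2:\phi\in E,\ \|\phi\|_2=1\}<\infty$; for unit $\phi\in E$, $\left<(L^W-aV_m)\phi,\phi\right>=\left<(L^W-aV)\phi,\phi\right>-a\!\int(V_m-V)|\phi|^2\le\lambda_j(V)+a\eta_m(1+M)$ by $(\star)$, hence $\lambda_j(V_m)\le\lambda_j(V)+a\eta_m(1+M)$. Lower bound: let $E_m$ be the span of the first $j$ normalized eigenfunctions $\phi_1^{(m)},\dots,\phi_j^{(m)}$ of $L^W-aV_m$; the $i$-th eigenvalue of $a^{-1}L^W-V_m$ equals $\lambda_i(V_m)/a$ and is bounded uniformly in $m$, so Lemma \ref{lemma2.3} bounds each $\|\nabla\phi_i^{(m)}\|_2$ uniformly in $m$, giving $M':=\sup_m\sup\{\|\nabla\phi\|_2^2:\phi\in E_m,\ \|\phi\|_2=1\}<\infty$. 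Since $\max\{\left<(L^W-aV_m)\phi,\phi\right>:\phi\in E_m,\ \|\phi\|_2=1\}=\lambda_j(V_m)$, for unit $\phi\in E_m$ one has $\left<(L^W-aV)\phi,\phi\right>\le\lambda_j(V_m)+a\eta_m(1+M')$ by $(\star)$, and therefore $\lambda_j(V)\le\lambda_j(V_m)+a\eta_m(1+M')$ by the min-max formula. Combining, $|\lambda_j(V_m)-\lambda_j(V)|\le a\eta_m(1+M+M')\to0$ --- indeed a Lipschitz estimate in the $\Ll^r$-norm with constant depending only on $j$, $a$ and the size of $\mathcal B$. The extension to $\dot\bH^{\alpha/2}(\R^n)$ follows at once: $C_0^\infty(\R^n)$ is dense there, convergence in $\dot\bH^{\alpha/2}$ forces convergence in $\Ll^s$ (Lemma \ref{critsob}) with $s\in(n/2,\infty)$ admissible, a Cauchy sequence is bounded, so by the Lipschitz bound $\{\lambda_j(V_m)\}$ is Cauchy and its limit defines a continuous extension of $\lambda_j$ to $\dot\bH^{\alpha/2}$, independent of the approximating sequence.

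The step I expect to be the main obstacle is the lower ($\liminf$) half: unlike the fixed eigenspace of $L^W-aV$, the eigenspaces $E_m$ of the perturbed operators move with $m$, and the argument needs them \emph{uniformly} bounded in $\bH^1$; this is precisely what the a priori estimate of Lemma \ref{lemma2.3} provides. It is also where the restriction $n<4+\alpha$ is used, since that is what makes the relevant Gagliardo--Nirenberg exponent $\theta$ strictly less than $1$, allowing $\|\nabla\phi\|_2^{2\theta}$ to be absorbed into $\|\nabla\phi\|_2^2$.
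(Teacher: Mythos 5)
Your proof is correct, and it rests on the same two pillars as the paper's: a variational characterization of the eigenvalues in which you swap the optimizing trial objects between the two potentials, and the a priori bound of Lemma \ref{lemma2.3} to control $\||\phi|^2\|_{r'}$ uniformly over those trial objects. The vehicle is different, though. You apply Courant--Fischer min-max to each $\lambda_j$ separately, which forces you to handle the moving eigenspaces $E_m$ of the perturbed operators in the $\liminf$ half --- exactly the obstacle you identify --- and to close a small bootstrap (lower bound on $\lambda_1(V_m)$ from the uniform $\Ll^{2n/(n-\alpha)}$ bound, upper bound on $\lambda_j(V_m)$ from the fixed subspace, then Lemma \ref{lemma2.3}) before the a priori estimate applies. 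The paper instead never touches individual eigenvalues until the last line: it proves the Lipschitz-type estimate for the concave functional $G_{\beta,a}(V)=\inf_{\vec\phi}\sum_j\beta_j\left<(L^W-aV)\phi_j,\phi_j\right>$, whose minimizer is the eigenframe by Lemma \ref{lemma2.2}, by testing $G_{\beta,a}(\tilde V_1)$ with the minimizing frame for $\tilde V_2$ (the one-sided inequality suffices by symmetry), and then recovers $\beta_j\lambda_j(V)$ as the difference $G_{(\beta_1,\dots,\beta_j),a}(V)-G_{(\beta_1,\dots,\beta_{j-1}),a}(V)$. That route is shorter because the infimum structure makes the cross-testing one-line and the degeneracy/moving-subspace issues invisible; your route is more self-contained (it does not need Lemma \ref{lemma2.2}) and is in fact more careful than the paper on one point, namely that Lemma \ref{lemma2.3} bounds the eigenfunctions only in terms of $\||V\||_{\alpha/2}$ \emph{and the eigenvalue}, so a preliminary uniform bound on the $\lambda_j(V_m)$ is genuinely needed --- you supply it explicitly, while the paper leaves it implicit. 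Your treatment of the endpoints $q=n/2,\infty$ by interpolation against the uniform $\Ll^{2n/(n-\alpha)}$ bound is also fine, though the paper gets the full closed range $[n/2,\infty]$ directly since Lemma \ref{lemma2.3} controls $\||\phi|^2\|_p$ for all $1\le p\le n/(n-2)$.
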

%In particular, Lemma \ref{lemma3.2} implies Proposition \ref{secondprop} via (\ref{Gdef}). 
\begin{proof}
	By Lemma \ref{lemma2.2}, there exists $\vec{\phi}^{(V)}\in \bH^k_1$ such that  
	$$ G_{\beta, a}(V) =  \inf_{\vec{\phi}\in \oplus^k\bH^1}\sum_{j=1}^k \beta_j \left<(L^W-aV)\phi_j, \phi_j\right> \ . $$
	Thus, for $\tilde V_1, \tilde V_2$ bounded in $\dot\bH^{\alpha/2}$,
	$$ G_{\beta,a}(\tilde V_1)-G_{\beta,a}(\tilde V_2)\leq  \sum_{j=1}^k \beta_j \left<(L^W-a\tilde V_1)\phi^{(\tilde V_2)}_j, \phi^{(\tilde V_2)}_j\right> 
	- \sum_{j=1}^k \beta_j \left<(L^W-a\tilde V_2)\phi^{(\tilde V_2)}_j, \phi^{(\tilde V_2)}_j\right> $$ 
	$$ = a  \sum_{j=1}^k \beta_j \left<(\tilde V_2-\tilde V_1)\phi^{(\tilde V_2)}_j, \phi^{(\tilde V^2)}_j\right>\leq a  \sum_{j=1}^k \beta_j \|\tilde V_2-\tilde V_1\|_q\||\phi^{(\tilde V_2)}_j|^2\|_{\frac{q}{q-1}} \ . $$
	By Lemma \ref{lemma2.3}, $\||\phi_j^{(V)}|^2\|_p$ is bounded in terms of the norm of $\||V\||_{\alpha/2}$ for $1\leq p\leq n/(n-2)$. It follows that 
$G_{\beta,a}(\tilde V_1)-G_{\beta,a}(\tilde V_2)$ is bounded in terms of $\|\tilde V_2-\tilde V_1\|_{q}$ for $n/2\leq q\leq\infty$, so $G_{\beta,a}$ is continuous in these norms. Since $n/2< \frac{2n}{n-\alpha}$ by assumption and $\dot\bH_{\alpha/2}(\R^n)$ is embedded in $\Ll^{2n/(n-\alpha)}(\R^n)$, we obtain the continuous extension of $G_{\beta,a}$ on $\dot\bH_{\alpha/2}(\R^n)$ . 
	
	Finally, to continuity of each eigenvalue $\lambda_j$ is obtained by subtraction $G_{(\beta_1, \ldots \beta_j), a}(V)-G_{(\beta_1, \ldots \beta_{j-1}), a}(V)\equiv \beta_j\lambda_j^{(V)}$ by Lemma \ref{lemma2.2}.

\end{proof}
\subsection{Lower limit of the dual functional}
Recall the  Lieb-Thirring   inequality  for Schrodinger operator : 
\begin{theorem}\cite{LT}
	For the Schrödinger operator $-\Delta +V$ on $\R^n$ with a real valued potential $V$ 
the numbers 
	$\mu _{1}(V)\leq \mu _{2}(V)\leq \dots \leq 0$  denote the (not necessarily finite) sequence of its negative eigenvalues. Then, for $n\geq 3$ and $\gamma\geq 0$
\be\label{LTu}	\sum_{j; \mu_j(V)<0} |\mu_j(V)|^\gamma\leq L_{\gamma,n}\int V_-^{n/2+\gamma} dx \ \ee
where $V_-= \max\{0, -V\}$ and $L_{\gamma,n}$ is independent of $V$. 

\end{theorem}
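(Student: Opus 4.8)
\textbf{The plan} is to first reduce the general case $\gamma>0$ to the borderline case $\gamma=0$ (the Cwikel--Lieb--Rozenblum bound), and then to prove that borderline case via the Birman--Schwinger principle together with a weak-Schatten estimate. For the reduction I would write $N(s):=\#\{j:\mu_j(V)<-s\}$ for $s>0$ and use the layer-cake identity
\[
\sum_{j:\mu_j(V)<0}|\mu_j(V)|^\gamma=\gamma\int_0^\infty s^{\gamma-1}N(s)\,ds .
\]
Now $N(s)$ is exactly the number of \emph{negative} eigenvalues of the shifted operator $-\Delta+(V+s)$, whose negative part is $(V+s)_-=(V_--s)_+$. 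Hence the $\gamma=0$ bound applied to $V+s$ gives $N(s)\le L_{0,n}\int_{\R^n}(V_--s)_+^{n/2}\,dx$. Substituting this and exchanging the order of integration by Fubini (all integrands being nonnegative) reduces the double integral to $\int_0^{V_-(x)}s^{\gamma-1}(V_-(x)-s)^{n/2}\,ds$, which is the Beta integral $B(\gamma,\tfrac n2+1)\,V_-(x)^{\gamma+n/2}$. This yields the claimed inequality with $L_{\gamma,n}=\gamma\,B(\gamma,\tfrac n2+1)\,L_{0,n}$, so everything rests on the case $\gamma=0$ (which is itself the assertion for $\gamma=0$, requiring no reduction).

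\textbf{For the $\gamma=0$ bound} $N(0)\le L_{0,n}\int V_-^{n/2}$ with $n\ge3$, I would first use the min-max principle: since $-\Delta+V\ge-\Delta-V_-$, the operator $-\Delta+V$ has no more eigenvalues below $-s$ than $-\Delta-V_-$, so it suffices to bound the latter. Next I apply the Birman--Schwinger principle: for $s>0$ the number of eigenvalues of $-\Delta-V_-$ below $-s$ equals the number of eigenvalues $\ge1$ of the compact positive operator
\[
K_s:=V_-^{1/2}(-\Delta+s)^{-1}V_-^{1/2}.
\]
Writing $B_s:=(-\Delta+s)^{-1/2}V_-^{1/2}$ one has $K_s=B_s^*B_s$, so the eigenvalues of $K_s$ are the squares of the singular values of $B_s$ and $N(s)=\#\{k:\sigma_k(B_s)\ge1\}$. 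The key observation is that $B_s=g_s(-i\nabla)\,f(x)$ is a Fourier multiplier $g_s(\xi)=(|\xi|^2+s)^{-1/2}$ composed with multiplication by $f=V_-^{1/2}$.

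\textbf{The heart of the matter} is a uniform bound on the singular values of such products, for which I would invoke Cwikel's estimate: if $g\in L^{p,\infty}(\R^n)$ and $f\in L^p(\R^n)$ with $p>2$, then $g(-i\nabla)f(x)$ lies in the weak Schatten class $\mathcal L^{p,\infty}$ with $\sigma_k\le C_n\|f\|_p\,\|g\|_{p,\infty}\,k^{-1/p}$. Taking $p=n$ is exactly what makes the exponents close: $g_s(\xi)\le|\xi|^{-1}\in L^{n,\infty}(\R^n)$ with a bound \emph{independent of $s$}, while $f=V_-^{1/2}\in L^n$ precisely when $V_-\in L^{n/2}$. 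Then $\sigma_k(B_s)\ge1$ forces $k\le\|B_s\|_{n,\infty}^n\le C_n^{\,n}\,\|V_-^{1/2}\|_n^n=C_n^{\,n}\int V_-^{n/2}$, uniformly in $s$; letting $s\downarrow0$ gives $N(0)\le L_{0,n}\int V_-^{n/2}$. The restriction $n\ge3$ enters here through the requirement $p=n>2$ in Cwikel's theorem, which is also why the borderline $\gamma=0$ estimate genuinely fails in dimensions $1$ and $2$.

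\textbf{The main obstacle} is precisely Cwikel's weak-Schatten estimate. A naive Hilbert--Schmidt ($p=2$) computation is useless here: $\mathrm{Tr}\,K_s^2=\int\!\int V_-(x)\,G_s(x-y)^2\,V_-(y)\,dx\,dy$ diverges for $n\ge4$ (the diagonal singularity $G_s^2\sim|x-y|^{4-2n}$ is non-integrable), and for $n=3$ it is finite but produces the wrong homogeneity $(\int V_-^{n/2})^{4/n}$ in $V_-$ rather than the correct first power. Establishing the weak-type bound therefore requires a genuinely more delicate argument --- either Cwikel's real-interpolation / dyadic decomposition of $g_s(-i\nabla)f$, or Lieb's alternative route through the heat semigroup and the Feynman--Kac formula, in which one bounds $\mathrm{Tr}\,F(K_s)$ for a suitable convex $F$ vanishing below a threshold and estimates the resulting Wiener integral. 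Either way this borderline estimate is the only non-routine ingredient; the layer-cake reduction of the first step and the Birman--Schwinger step are standard.
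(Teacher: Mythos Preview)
Your sketch is correct and follows the standard route to the Lieb--Thirring inequalities: the layer-cake reduction from general $\gamma>0$ to the CLR bound $\gamma=0$, followed by Birman--Schwinger and Cwikel's weak-$\mathcal L^{n,\infty}$ estimate for operators of the form $g(-i\nabla)f(x)$. The computations (the identification $(V+s)_-=(V_--s)_+$, the Beta-integral evaluation, the role of $p=n>2$) are all accurate.

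However, there is nothing to compare against: in the paper this theorem carries the citation \cite{LT} and is \emph{quoted without proof} as a classical input, used only as a tool in the proof of Proposition~\ref{prop2.1}. The paper supplies no argument for it whatsoever. So your proposal is not an alternative to the paper's proof but rather a (sound) proof of a result the paper simply imports from the literature.
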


\begin{prop}\label{prop2.1}
	The functional $V \mapsto \frac{a}{2}\left<V,V\right>_{\alpha/2} + G_{\beta, a}(V)$ is bounded from below  on $\dot\bH_{\alpha/2}$ for any $a>0$ if $3\leq n<2+\alpha$.    if $n=3, \alpha=1$  or $n=4, \alpha=2$ there exists $a=a^{(n)}_c(\vec{\beta})>0$ independent of $W$ for which the functional is bounded from below if $a< a^{(n)}_c(\vec{\beta})$ and unbounded if $a>a^{(n)}_c(\vec{\beta})$. 
	
	Moreover, in the cases $n=3$ and $n=4, a<a^{(n)}_c(\vec{\beta})$  the functional is coersive on $\dot\bH_{\alpha/2}(\R^n)$, namely
		\be\label{coersive} \lim_{\||V\||_{\alpha/2}\rightarrow\infty} \frac{1}{2}\left< V,V\right>_{\alpha/2} + G_{\beta, a}(V)=\infty \ . \ee
\end{prop}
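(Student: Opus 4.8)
The plan is to control the concave part $G_{\beta,a}(V)=\sum_{j=1}^{k}\beta_{j}\lambda_{j}(V)$ of ${\cal H}^{W,\alpha}_{\beta,a}(V)=\tfrac a2\langle V,V\rangle_{\alpha/2}+G_{\beta,a}(V)$ by two complementary devices: a Lieb--Thirring estimate together with the critical embedding of $\dot\bH^{\alpha/2}$ (Lemma \ref{critsob}) for the lower bounds, and, for the upper bounds and the sharp threshold, a family of $\Ll^{2}$-preserving dilations tested against the characterization $G_{\beta,a}(V)\le\sum_{j}\beta_{j}\langle(L^{W}-aV)\phi_{j},\phi_{j}\rangle$, which holds for every orthonormal $k$-frame directly from the definition (\ref{minlin}).

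\textbf{Lower bound.} By Lemma \ref{lemma2.2}, $G_{\beta,a}(V)=\sum_{j}\beta_{j}\lambda_{j}(V)$. Since $W\ge 0$, the min-max principle gives $\lambda_{j}(V)=\lambda_{j}(L^{W}-aV)\ge\lambda_{j}(-\Delta-aV)$, which is $\ge\mu_{j}(-aV)$ as long as $-\Delta-aV$ has at least $j$ negative eigenvalues and $\ge 0$ otherwise. Using that only the $k$ lowest eigenvalues enter $G_{\beta,a}$, that $\beta_{j}\le\beta_{1}$, and Hölder's inequality for finite sums,
$$G_{\beta,a}(V)\ \ge\ -\,\beta_{1}\,k^{1-1/\gamma}\Bigl(\textstyle\sum_{j}|\mu_{j}(-aV)|^{\gamma}\Bigr)^{1/\gamma}\qquad(\gamma\ge 1).$$
Pick $\gamma=\gamma_{n,\alpha}$ with $\tfrac n2+\gamma=\tfrac{2n}{n-\alpha}$; one checks $\gamma>1$ throughout $3\le n\le 2+\alpha$, $\alpha\in(0,2]$. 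Then (\ref{LTu}) and Lemma \ref{critsob} yield $G_{\beta,a}(V)\ge -C_{n,\alpha,k}\,a^{q}\,\||V\||_{\alpha/2}^{q}$ with $q=\tfrac{4}{\alpha+4-n}$, whence
$${\cal H}^{W,\alpha}_{\beta,a}(V)\ \ge\ \tfrac a2\,\||V\||_{\alpha/2}^{2}-C_{n,\alpha,k}\,a^{q}\,\||V\||_{\alpha/2}^{q}.$$
(The estimate is first obtained for $V\in C_{0}^{\infty}$ and then extended to $\dot\bH^{\alpha/2}$ by Proposition \ref{secondprop}.) If $3\le n<2+\alpha$ then $q<2$, so the right-hand side is bounded below and coercive for every $a>0$. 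If $n=2+\alpha$ then $q=2$, so ${\cal H}^{W,\alpha}_{\beta,a}(V)\ge(\tfrac a2-C_{n,\alpha,k}a^{2})\||V\||_{\alpha/2}^{2}\ge 0$ whenever $a\le(2C_{n,\alpha,k})^{-1}$; thus the set of couplings for which ${\cal H}^{W,\alpha}_{\beta,a}$ is bounded below is nonempty, with a $W$-independent lower threshold.

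\textbf{Sharp threshold in the critical case $n=2+\alpha$.} Take any orthonormal $k$-frame $\vec\psi=(\psi_{1},\dots,\psi_{k})$ of smooth compactly supported functions, fix $x_{0}\in\R^{n}$, and set $\phi_{j}^{\tau}(x)=\tau^{n/2}\psi_{j}(\tau(x-x_{0}))$, still orthonormal. Taking $V=I_{\alpha}*\rho_{\vec\phi^{\tau}}$ with $\rho_{\vec\phi^{\tau}}=\sum_{j}\beta_{j}|\phi_{j}^{\tau}|^{2}$ (in $\dot\bH^{\alpha/2}$ by Lemma \ref{lemapqrietz} and Lemma \ref{lemma2.1}) in the variational inequality above, and using (\ref{Ivar})--(\ref{IinvD}), the right-hand side equals $\sum_{j}\beta_{j}\langle L^{W}\phi_{j}^{\tau},\phi_{j}^{\tau}\rangle-\tfrac a2\langle\rho_{\vec\phi^{\tau}},I_{\alpha}*\rho_{\vec\phi^{\tau}}\rangle$. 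The scaling relations $\int|\nabla\phi_{j}^{\tau}|^{2}=\tau^{2}\int|\nabla\psi_{j}|^{2}$, $\langle|\phi_{i}^{\tau}|^{2},I_{\alpha}*|\phi_{j}^{\tau}|^{2}\rangle=\tau^{\,n-\alpha}\langle|\psi_{i}|^{2},I_{\alpha}*|\psi_{j}|^{2}\rangle$, and $\int W|\phi_{j}^{\tau}|^{2}=\int W(x_{0}+y/\tau)|\psi_{j}(y)|^{2}\,dy=O(1)$ as $\tau\to\infty$ (finite since $W\in\Ll^{\infty}_{loc}$) give, for $n-\alpha=2$,
$${\cal H}^{W,\alpha}_{\beta,a}\bigl(I_{\alpha}*\rho_{\vec\phi^{\tau}}\bigr)\ \le\ \tfrac{\tau^{2}}{2}\bigl[A(\vec\psi)-\tfrac a2 B(\vec\psi)\bigr]+O(1),$$
with $A(\vec\psi)=\sum_{j}\beta_{j}\|\nabla\psi_{j}\|_{2}^{2}$ and $B(\vec\psi)=\sum_{i,j}\beta_{i}\beta_{j}\langle|\psi_{i}|^{2},I_{\alpha}*|\psi_{j}|^{2}\rangle$. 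By Lemma \ref{lemma2.1} with the Gagliardo--Nirenberg inequality at $\theta=(n-\alpha)/4=\tfrac12$, $\langle|\psi|^{2},I_{\alpha}*|\psi|^{2}\rangle\le C\|\nabla\psi\|_{2}^{2}$ for $\|\psi\|_{2}=1$, hence $B(\vec\psi)\le C\,A(\vec\psi)$ by Cauchy--Schwarz for the positive form $I_{\alpha}$ and $\sum\beta_{j}=1$. Therefore $a_{c}^{(n)}(\beta):=2\inf_{\vec\psi}A(\vec\psi)/B(\vec\psi)\in(0,\infty)$ is independent of $W$. For $a>a_{c}^{(n)}(\beta)$ a frame with $A(\vec\psi)<\tfrac a2 B(\vec\psi)$ makes ${\cal H}^{W,\alpha}_{\beta,a}\to-\infty$ along $\tau\to\infty$. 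For $a\le a_{c}^{(n)}(\beta)$, dropping the nonnegative $W$-term and using (\ref{Ivar}) gives ${\cal H}^{W,\alpha}_{\beta,a}(V)\ge\inf_{\vec\phi}\bigl[A(\vec\phi)-\tfrac a2 B(\vec\phi)\bigr]\ge\inf_{\vec\phi}\bigl(1-a/a_{c}^{(n)}(\beta)\bigr)A(\vec\phi)=0$, so it is bounded below. This is the dichotomy.

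\textbf{Coercivity and the main obstacle.} For $\epsilon>0$ the substitution $U=V/(1+\epsilon)$ at coupling $(1+\epsilon)a$ satisfies $(1+\epsilon)a\,U=aV$, hence $\lambda_{j}(L^{W}-(1+\epsilon)aU)=\lambda_{j}(V)$ and
$${\cal H}^{W,\alpha}_{\beta,a}(V)\ =\ {\cal H}^{W,\alpha}_{\beta,(1+\epsilon)a}\!\bigl(\tfrac{V}{1+\epsilon}\bigr)+\frac{\epsilon a}{2(1+\epsilon)}\,\langle V,V\rangle_{\alpha/2}.$$
When $3\le n<2+\alpha$ (in particular $n=3$) the first term is bounded below for every $\epsilon$; when $n=2+\alpha$ and $a<a_{c}^{(n)}(\beta)$ it is bounded below once $\epsilon$ is small enough that $(1+\epsilon)a<a_{c}^{(n)}(\beta)$. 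In either case ${\cal H}^{W,\alpha}_{\beta,a}(V)\ge m+\tfrac{\epsilon a}{2(1+\epsilon)}\||V\||_{\alpha/2}^{2}\to\infty$, which is (\ref{coersive}). The delicate point is the third paragraph: one must check that at the critical exponent $n-\alpha=2$ the potential contribution $\int W|\phi_{j}^{\tau}|^{2}=O(1)$ is of strictly lower order than the $O(\tau^{2})$ kinetic and interaction terms --- this is precisely why $a_{c}^{(n)}(\beta)$ comes out independent of $W$ --- and one must establish the strict positivity $a_{c}^{(n)}(\beta)>0$, i.e. the borderline Hardy--Littlewood--Sobolev / Gagliardo--Nirenberg bound $B(\vec\psi)\le C\,A(\vec\psi)$, which holds only because $n=2+\alpha$ is exactly the exponent at which the interaction energy and the Dirichlet energy transform identically under $\Ll^{2}$-preserving dilations; matching the two halves of the dichotomy there, rather than settling for the gap between the Lieb--Thirring threshold of the second paragraph and the scaling threshold, is what pins down $a_{c}^{(n)}(\beta)$.
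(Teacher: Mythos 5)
Your lower-bound half is the same as the paper's: bound $\lambda_j(V)\geq \mu_j(-aV)$ using $W\geq 0$, apply the Lieb--Thirring inequality with the exponent $\gamma$ chosen so that $n/2+\gamma=\tfrac{2n}{n-\alpha}$, and close with the critical Sobolev embedding of $\dot\bH^{\alpha/2}$; you arrive at the same power $\||V\||_{\alpha/2}^{4/(4+\alpha-n)}$, hence coercivity for all $a>0$ when $3\leq n<2+\alpha$ and for small $a$ when $n=2+\alpha$. Where you genuinely diverge is the construction of the critical threshold. The paper stays entirely on the dual side: it introduces the $W$-free functional $G^0_{\beta,a}$ built from the negative eigenvalues of $-\Delta-aV$, shows via the substitution $V\mapsto V/\sqrt a$ that the $W$-free dual energy becomes negative for large $a$, propagates that negativity to $-\infty$ by the dilation $V_\delta=\delta^2V(\delta x)$ (under which both $\langle V,V\rangle_{\alpha/2}$ and $\lambda_j^0$ scale as $\delta^2$ exactly when $n=2+\alpha$), controls the $W$-contribution through the sandwich $G^0\leq G\leq G^0+\sum\beta_j\int W|\phi_j^0|^2+O(1)$, and finally defines $a_c$ as the infimum of couplings for which the $W$-free dual infimum is negative. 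You instead work on the primal side, defining $a_c=2\inf_{\vec\psi}A(\vec\psi)/B(\vec\psi)$ as a Gagliardo--Nirenberg-type ratio over orthonormal frames and transporting the $\Ll^2$-preserving dilation of the frames through the duality $\mathcal{H}(I_\alpha*\rho_{\vec\phi})\leq 2\mathcal{E}^{(\alpha)}_{\beta,a}(\vec\phi)$. Both are correct; yours has two concrete advantages. First, it gives the matching lower bound $\mathcal{H}\geq\inf_{\vec\phi}[A-\tfrac a2 B]\geq 0$ for \emph{all} $a\leq a_c$ in one stroke, whereas the paper's explicit estimate only yields boundedness below up to the Lieb--Thirring constant and must appeal to $G\geq G^0$ and the definition of $a_c$ to cover the rest. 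Second, your $\epsilon$-substitution $V\mapsto V/(1+\epsilon)$ at coupling $(1+\epsilon)a$ proves coercivity on the whole range $a<a_c$, which is what Lemma \ref{lemma3.1} and the proof of the main theorem actually use; the paper's proof only exhibits coercivity for $a$ below the (possibly smaller) Lieb--Thirring threshold, so your argument closes a real gap.

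Two small points to tighten. Your dichotomy defines $a_c$ via an infimum over frames, but the unboundedness direction tests only smooth compactly supported frames while the boundedness direction needs the infimum over all of $\oplus^k\bH^1$; you should say explicitly that the two infima coincide by density of $C_0^\infty$ frames in $\oplus^k\bH^1$ (with a Gram--Schmidt correction to restore exact orthonormality), since $A$ and $B$ are continuous in the $\bH^1$ topology by Lemma \ref{lemma2.1}. Also watch the bookkeeping: with the paper's normalization $2\mathcal{E}^{(\alpha)}_{\beta,a}(\vec\phi)=A(\vec\phi)+\sum_j\beta_j\int W|\phi_j|^2-\tfrac a2 B(\vec\phi)$, your displayed upper bound should read $\tau^2[A(\vec\psi)-\tfrac a2B(\vec\psi)]+O(1)$ rather than carrying the extra factor $\tfrac12$; this does not affect the dichotomy since you define $a_c$ consistently with the bracket you use.
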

\begin{proof}
	Recall that $\lambda_j(V)$ are the eigenvalues of $L^W -aV=-\Delta+ W-aV$.  Since $W\geq 0$  it follows that $\lambda_j(V)\geq\mu_j(aV)$.  Hence  $G_{\beta, a}(V):= \sum_{j=1}^k\beta_j\lambda_j(V)\geq  -\sum_{j; \mu_j(aV)<0}\beta_j |\mu_j(aV)|$. 
	By Holder inequality, for $\gamma\geq 1$,  $\gamma^{'}=\gamma/(\gamma-1)$ and (\ref{LTu})
	$$G_{\beta,a}(V) \geq -\left(
	\sum_{j=1}^k	|\beta_j|^{\gamma^{'}}\right)^{1/\gamma^{'}}
	\left( \sum_{j; \mu_j(aV)<0}| \mu_j(aV)|^\gamma\right)^{1/\gamma} \geq$$
	$$ -a^{1+n/2\gamma}L_{\gamma,n}^{1/\gamma}
	\|\vec{\beta}\|_{\gamma^{'}}\left(\int V_-^{n/2+\gamma} dx\right)^{1/\gamma}
	\ . $$
	Set now $\gamma=\frac{2n}{n-\alpha}-n/2\equiv  \frac{(4+\alpha)n-n^2}{2(n-\alpha)}$. 
	Then, if $2<n<2+\sqrt{1+3\alpha}$ we get $\gamma^{'}_{n,\alpha}\geq 1$ and
	$$ G_{\beta,a}(V)\geq -
	a^{\frac{4}{4+\alpha-n}} L_{\gamma,n}^{1/\gamma}
	\|\beta\|_{\gamma^{'}_{n,\alpha} }
	\left(\int_{\R^n} V_+^{\frac{2n}{n-\alpha}}\right)
	^{\frac{2(n-\alpha)}{(4+\alpha)n-n^2}} \ . $$
	Using the critical Sobolev inequality
	$$ G_{\beta,a}(V)\geq -
	a^{\frac{4}{4+\alpha-n}} L_{\gamma,n}^{1/\gamma}
	\|\beta\|_{\gamma^{'}_{n,\alpha} }S_{n,\alpha}^{\frac{4}{(4+\alpha)-n}}\left< V,V\right>_{\alpha/2} ^{\frac{2}{(4+\alpha)-n}}$$
	hence 
	$$ \frac{a}{2}
	\left< V,V\right>_{\alpha/2}  +G_{\beta,a}(V) \geq  
	$$
	\be\label{VVbeta}a\left< V,V\right>_{\alpha/2} ^{\frac{2}{4+\alpha-n}}\left( \frac{1}{2}
\left< V,V\right>_{\alpha/2} ^{1-\frac{2}{4+\alpha-n}}   -a^{\frac{n-\alpha}{4+\alpha-n}} 
	L_{\gamma,n}^{1/\gamma}\|\beta\|_{\gamma^{'}_{n,\alpha} }S_{n,\alpha}^{\frac{4}{(4+\alpha)-n}}\right)
	\ee
It follows that ${\cal H}^{W,\alpha}_{\beta,a}$ is coersive for any $a>0$ if  $3\leq n<2+\alpha$. If $n=2+\alpha$ then the functional is coersive if 
$a<\frac{S_{n,\alpha}^{\frac{4}{n-(4+\alpha)}}}{2L_{\gamma,n}^{1/\gamma}}|\beta|_{\gamma^{'}_{n,\alpha}}^{-1}$. Note that $\gamma^{'}_{n,\alpha}=\infty$ for  $n=3, \alpha=1$ and 
$\gamma^{'}_{n,\alpha}=2$ for $n=4, \alpha=2$. Hence coersivity holds if
\begin{itemize}
	\item $(\alpha,n)=(1,3)$: \ \ \ $a<\frac{S_{3,1}^{-2}}{2L_{\gamma,3}^{1/\gamma}}
	|\beta|_{\infty}^{-1}$
		\item $(\alpha,n)=(2,4)$: \ \ \ $a<\frac{S_{4,12}^{-2}}{2L_{\gamma,4}^{1/\gamma}}
	|\beta|_{2}^{-1}$
\end{itemize}

		We now prove the existence of a critical strengh $a_c(\beta)$ in both cases. 
		For a given, non-negative  $V\in \dot\bH^{\alpha/2}$, let $k(V)$ be the number of negative  eigenvalues of $ -\Delta - aV$, enumerated by order  $\lambda_1^0(V)< \lambda_2^0(V)\leq \ldots \lambda_{k(V)}^0(V) < 0$. Denote 
		$G^0_{\beta,a}(V)=\sum_{1}^{k\wedge k(V)}\beta_j\lambda_j^0(V)$.  Let $\bar\phi_j^0$ be the corresponding eigenfunctions of $-\Delta-aV$.  From the variational characterization of $G_{\beta,a}$ introduced in Lemma \ref{lemma2.2} we may obtain
		\be\label{G0g} G^0_{\beta,a} (V)\leq G_{\beta,a} (V)\leq G^0_{\beta,a}(V) + \sum_{j=1}^{k\wedge k(V)}\beta_j\int W|\phi_j^0|^2 +O(1)\ee
		where $O(1)$ stands for some constant independent of $V$. \footnote{ We may estimate this constant by $\sum_{k\wedge k(V)+1}^k \beta_j \lambda^w_j$ where $\lambda_j^w$ is the $j-$ eigenvalue of $H_0= -\Delta +W$.}
		
			Substitute now  now  $ V/\sqrt{a}$ for $V$. Then $\frac{a}{2}\left< V/\sqrt{a},V/\sqrt{a}\right>_{\alpha/2} +G^0_{\beta, a}(V/\sqrt{a}) = \frac{1}{2}\left< V,V\right>_{\alpha/2} +G^0_{\beta, a}(V/\sqrt{a}) $. By definition $G^0_{\beta,a}(V/\sqrt{a})=\sum_{1}^{k\wedge k(V/\sqrt{a})}\beta_j\lambda_j^0(V/\sqrt{a})$, while 
		$\lambda_j^0(V/\sqrt{a})$ is a negative  eigenvalue of $-\Delta +W +\sqrt{a}V$. Thus, if $V<0$ somewhere then  $\lim_{a\rightarrow\infty}G^0_{\beta,a} (V/\sqrt{a}) =-\infty$.  In particular it follows that 
		\be\label{ifthen} \text{if}\ \  a>0 \ \ \text{large enough}  \ \text{then} \ \exists  V\in \dot\bH_{\alpha/2}  \ \text{for which} \ \frac{a}{2}\left< V,V\right>_{\alpha/2}+G^0_{\beta, a}(V)<0\ee 
		
		Now apply the transformation $V\mapsto V_\delta(x):= \delta^{2}V(\delta x)$, where $\delta>0$. 
	We obtain that $\lambda^0_j(V_\delta) = \delta^2\lambda^0_j(V)$ (in particular, $k(V_\delta)=k(V)$), while   $\phi_j^{0,\delta} := \delta^{n/2}\phi_j^0(\delta x)$ is the corresponding normalized  eigenfunction. Hence
	$G^0_{\beta,a}(V_\delta)=\delta^2G^0_{\beta,a}(V)$ so, by (\ref{G0g}), $G_{\beta,a}(V_\delta)\leq \delta^2 G^0_{\beta,a}(V)+\sum_{j=1}^{k\wedge k(V)}\beta_j\int W|\phi_{j, \delta}^0|^2 +O(1)$.

	Next, we obtain for both $n=3, \alpha=1$ and $n=4, \alpha=2$ cases that the quadratic form scale the same:  $\left< V_\delta,V_\delta\right>_{\alpha/2}= \delta^2\left< V,V\right>_{\alpha/2}$ so  
	\be\label{G<G0} \frac{a}{2}\left< V_\delta,V_\delta\right>_{\alpha/2}+G_{\beta, a}(V_\delta) \leq  \delta^2\left[\frac{a}{2}\left< V,V\right>_{\alpha/2} +G^0_{\beta, a}(V) \right] + \sum_{j=1}^{k\wedge k(V)}\beta_j\int W|\phi_{j, \delta}^0|^2 +O(1)\ . \ee
 By (\ref{Winf}) we also get $\lim_{\delta\rightarrow\infty} \int W|\phi^0_{j,\delta}|^2 =W(0)=0$, so, using (\ref{ifthen}) we obtain the existence of $V$ for which 
	$\frac{a}{2}\left< V_\delta,V_\delta\right>_{\alpha/2} +G_{\beta, a}(V_\delta)\rightarrow -\infty$ as $\delta\rightarrow\infty$, if $a>0$ is large enough.

Now let 
	$$ a_c(\vec{\beta}) =\inf\left\{a>0;\ \ \inf_{V\in \dot\bH^{\alpha/2}}\frac{a}{2}\left<V,V\right>_{\alpha/2} +G^0_{\beta, a}(V)<0 \right\}   \ . $$
It follows that  $\infty> a_c(\beta)>0$ and is independent of $W$  for any $\vec\beta$ in the cases $n=3, \alpha=1$ and $n=4, \alpha=2$. 
\end{proof}

\subsection{Existence of minimizers of the local problem}
When attempting to prove the existence of minimizers to the functional  ${\cal H}^{W,\alpha}_{\beta,a}$ (\ref{Hdef}) we face the problem of compactness of the space $\dot\bH_{\alpha/2}$.  So, we start by considering the subspace  of $\dot\bH_{\alpha/2}(B^n_R)\subset \dot\bH_{\alpha/2}\R^n)$, obtained by the closure of $C_0^\infty$ of functions {\em supported on the ball } $B_R^n:=\{x\in \R^n; \ |x|<R\}$ under the induced $\||\cdot\||_{\alpha/2}$ norm (section \ref{crush}). 

Note that, by this definition,  $V\in \dot\bH_{\alpha/2}(B^n_R)$ is defined {\em over the whole} of $\R^n$, and is identically zero on $\R^n-B^n_R$. 

By 
% Let $\dot\bH^{\alpha/2}(B^n_R)$ be the closure of $C_0^\infty(B^n_R)$ in the norm of $\dot\bH^{\alpha/2}$. 

\begin{lemma}\label{lemma3.1}
	Given $\vec\beta$ satisfying (\ref{betadef}), $R>0$, $3\leq n<2+\alpha$ or either $n=3, \alpha=1$  or $n=4, \alpha=2$. Then 
  there exists a minimizer $\bar{V}_R$ of ${\cal H}^{W,\alpha}_{\beta,a}$  restricted to  $\dot\bH_{\alpha/2}(B_R^n)$. Moreover, 
  \be\label{Ubr=}I_\alpha^{B_R^n}(\bar V_R)= \left(\sum_{j=1}^k \beta_j|\bar\phi^R_j|^2\right)\ee
	where $\bar\phi_j^R$ are the normalized  eigenstates of 
	$L^W- a\bar V_R$ in $\R^n$. 
\end{lemma}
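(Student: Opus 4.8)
The plan is to run the direct method of the calculus of variations on the restricted functional, using that $\dot\bH_{\alpha/2}(B_R^n)$ is a Hilbert space (hence reflexive and weakly sequentially compact on bounded sets). First I would fix a minimizing sequence $V_m\in \dot\bH_{\alpha/2}(B_R^n)$ for $\inf {\cal H}^{W,\alpha}_{\beta,a}$. By Proposition \ref{prop2.1} (in the regime $3\le n<2+\alpha$, and for $a<a_c(\beta)$ in the critical cases $n=3,\alpha=1$ or $n=4,\alpha=2$) the functional is coercive in the $\||\cdot\||_{\alpha/2}$ norm, so $\{V_m\}$ is bounded in $\dot\bH_{\alpha/2}(B_R^n)$; passing to a subsequence, $V_m\rightharpoonup \bar V_R$ weakly, and since $\dot\bH_{\alpha/2}(B_R^n)$ is a closed subspace, $\bar V_R\in\dot\bH_{\alpha/2}(B_R^n)$. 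The quadratic term $\frac{a}{2}\langle V,V\rangle_{\alpha/2}$ is weakly lower semicontinuous (it is the squared norm). For the term $G_{\beta,a}(V)=\sum_j\beta_j\lambda_j(V)$ I would argue it is weakly \emph{continuous} along the bounded sequence: by Lemma \ref{critsob} the embedding $\dot\bH_{\alpha/2}(B_R^n)\hookrightarrow \Ll^{2n/(n-\alpha)}(B_R^n)$ combined with the boundedness of the support in $B_R^n$ yields, via Rellich-type compactness, strong convergence $V_m\to\bar V_R$ in $\Ll^{q}$ for $q$ in a range including some $q\ge n/2$ with $q<2n/(n-\alpha)$; then Lemma \ref{lemma3.2} (continuity of $G_{\beta,a}$ on bounded sets of $\dot\bH_{\alpha/2}$ in the $\Ll^q$ norm, $n/2\le q\le\infty$) gives $G_{\beta,a}(V_m)\to G_{\beta,a}(\bar V_R)$. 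Hence ${\cal H}^{W,\alpha}_{\beta,a}(\bar V_R)\le\liminf {\cal H}^{W,\alpha}_{\beta,a}(V_m)=\inf$, so $\bar V_R$ is a minimizer.

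For the identity (\ref{Ubr=}) I would compute the Euler-Lagrange equation of ${\cal H}^{W,\alpha}_{\beta,a}$ restricted to $\dot\bH_{\alpha/2}(B_R^n)$, exactly as was done on all of $\R^n$ in the discussion preceding Corollary \ref{cor2.3}, but now testing only against $\phi\in C_0^\infty(B_R^n)$. By Corollary \ref{cor2.1} the sup-gradient of $G_{\beta,a}$ at $\bar V_R$ is $a\sum_{j=1}^k\beta_j|\bar\phi_j^R|^2$, where $\bar\phi_j^R$ are normalized eigenfunctions of $L^W-a\bar V_R$ on $\R^n$ corresponding to $\lambda_j(\bar V_R)$. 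Stationarity against test functions supported in $B_R^n$ then reads $\langle \bar V_R,\phi\rangle_{\alpha/2}=\langle \sum_j\beta_j|\bar\phi_j^R|^2,\phi\rangle$ for all $\phi\in C_0^\infty(B_R^n)$, which by the variational definition (\ref{defIomega}) of $I_\alpha^{B_R^n}$ and its Euler-Lagrange characterization (the analogue of (\ref{ELalpha})--(\ref{IinvD}) with Dirichlet boundary conditions) is precisely $I_\alpha^{B_R^n}\!\left(\sum_{j=1}^k\beta_j|\bar\phi_j^R|^2\right)=\bar V_R$, i.e. (\ref{Ubr=}). One subtlety: the source $\sum_j\beta_j|\bar\phi_j^R|^2$ is supported on all of $\R^n$, not on $B_R^n$; but the weak formulation only tests against $C_0^\infty(B_R^n)$, so the relation should be read as $\bar V_R$ being the $\dot\bH_{\alpha/2}(B_R^n)$-projection solving $(-\Delta)^{\alpha/2}\bar V_R=\sum_j\beta_j|\bar\phi_j^R|^2$ in the weak Dirichlet sense, which is what $I_\alpha^{B_R^n}$ encodes. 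I should state this carefully to avoid ambiguity.

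I expect the main obstacle to be the weak continuity of $G_{\beta,a}$ along the minimizing sequence, i.e.\ the compactness step. On the bounded ball $B_R^n$ this is where one genuinely uses that the domain is bounded: one needs a Rellich--Kondrachov-type compact embedding $\dot\bH_{\alpha/2}(B_R^n)\hookrightarrow\!\!\hookrightarrow \Ll^q(B_R^n)$ for subcritical $q$, and then one must check that some admissible $q$ falls in the range $n/2\le q<2n/(n-\alpha)$ required by Lemma \ref{lemma3.2}; the inequality $n/2<2n/(n-\alpha)$ (equivalent to $n<4+\alpha$, which holds in all our cases) is exactly what makes this possible, and the estimates of Lemma \ref{lemma2.3} guarantee the eigenfunction densities $|\bar\phi_j^R|^2$ stay bounded in the dual Lebesgue space along the way. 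A secondary, more bookkeeping-level point is confirming that coercivity from Proposition \ref{prop2.1} (proved on $\dot\bH_{\alpha/2}(\R^n)$) descends to the subspace $\dot\bH_{\alpha/2}(B_R^n)$ — it does, trivially, since the subspace inherits the same norm and the same functional — and that, in the critical cases, $a<a_c(\beta)$ is the relevant hypothesis (the lemma's phrasing "$n=3,\alpha=1$ or $n=4,\alpha=2$" should be understood together with this restriction on $a$, inherited from Proposition \ref{prop2.1}).
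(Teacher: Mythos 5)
Your proposal is correct and follows essentially the same route as the paper: direct method with coercivity from Proposition \ref{prop2.1}, weak compactness of $\dot\bH_{\alpha/2}(B_R^n)$, compact embedding into $\Ll^q(B_R^n)$ for some $q$ with $n/2\le q<2n/(n-\alpha)$ feeding into Lemma \ref{lemma3.2} for continuity of $G_{\beta,a}$, lower semicontinuity of the quadratic term, and then the Euler--Lagrange equation via (\ref{defIomega}) with $\Omega=B_R^n$ for the identity (\ref{Ubr=}). You in fact handle two points more carefully than the paper does: the reading of (\ref{Ubr=}) as the weak Dirichlet relation $\bar V_R=I_\alpha^{B_R^n}\!\left(\sum_j\beta_j|\bar\phi_j^R|^2\right)$ despite the eigenfunction densities not being supported in $B_R^n$, and the implicit restriction $a<a_c(\beta)$ in the critical cases.
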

\begin{proof}
Let $V_n\subset \dot\bH_{\alpha/2}(B^n_R)$ be a minimizing sequence of ${\cal H}^{W,\alpha}_{\beta,a}$ . Since ${\cal H}^{W,\alpha}_{\beta,a}$  is bounded from below by Proposition  \ref{prop2.1} we get
$$ \lim_{n\rightarrow\infty} {\cal H}^{W,\alpha}_{\beta,a}(V_n)= \inf_{V\in \dot\bH_{\alpha/2}(B^n_R)} {\cal H}^{W,\alpha}_{\beta,a}(V) \ . $$
Since $\dot\bH_{\alpha/2}(B^n_R)$ is weakly compact  and the functional is coersive (\ref{coersive}) there exists a weak limit $\bar V_R\in \dot\bH_{\alpha/2}(B^n_R)$ of this sequence. Moreover, by Sobolev compact embedding, $V_n$ converges strongly to $\bar V_R$ in $\Ll^q(B_B^n)$ for any $1\leq q<2n/(n-\alpha)$.  Since $n<2+\alpha$ then $n/2< 2n/(n-\alpha)$ and 
by Lemma \ref{lemma3.2}
\be\label{wealls}\lim_{n\rightarrow\infty} G_{\beta,a}(V_n)= G_{\beta,a}(\bar V_R) \ . \ee

Since $V \mapsto\|| V\||_{\alpha/2}^2$ is l.s.c , it follows 
that
$$\lim_{n\rightarrow \infty} \left<V_n, V_n\right>_{\alpha/2}dx \geq  \left<\bar V_R, \bar V_R\right>_{\alpha/2}\ . $$

This and (\ref{wealls}) imply that $\bar V_R$ is, indeed, a minimizer of  ${\cal H}^{W,\alpha}_{\beta,a}$ on $\dot\bH_{\alpha/2}(B^n_R)$. 

Finally, (\ref{Ubr=}) follows from (\ref{defIomega}) while taking $\Omega=B_R^n$. 
 \end{proof}

\subsection{Proof of Theorem  \ref{maintheorem}(ii, iii)}
	Let $V_m$ be a minimizing sequence for ${\cal H}^{W,\alpha}_{\beta,a}$ in $\dot\bH_{\alpha/2}(\R^n)$. Since $C^\infty_0(\R^n)$ is dense in $\dot\bH_{\alpha/2}(\R^n)$ by definition, we can assume that there exists a sequence $R_m\rightarrow\infty$ such that $V_m$ is supported in $B^n_{R_m}$. 

Let $\bar V_m$ be the  minimizers of ${\cal H}^{W,\alpha}_{\beta,a}$ 
on $\dot\bH_{\alpha/2}(B^n_R)$.  

Since ${\cal H}^{W,\alpha}_{\beta,a}(\bar V_m) \leq {\cal H}^{W,\alpha}_{\beta,a}(V_m)$ then $\bar V_m$ is a minimizing sequence of ${\cal H}^{W,\alpha}_{\beta,a}$ on $\dot\bH_{\alpha/2}(\R^n)$ as well.   Now, under the conditions of the Theorem we know by Proposition \ref{prop2.1} that ${\cal H}^{W,\alpha}_{\beta,a}$ is bounded from below on $\dot\bH_{\alpha/2}(\R^n)$ and coersive (\ref{coersive}), so $\||\bar V_m\||_{\alpha/2}$ are uniformly bounded. 
 Let $\bar\phi_j^m$ be the normalized eigenfunctions of $L^W-a\bar V_m$. 
By Lemma \ref{lemma2.3} we obtain that $\|\nabla\bar\phi^m_j\|_2$ and  $\int_{\R^n} W|\bar\phi_j^m|^2$ 
and $\||\bar\phi_j^R|^2\|_{n/(n-2)}$ 
are uniformly bounded on $\R^n$.
 In addition,  $\|\bar\phi_j^m\|_2=1$ by definition.
 In particular, $\bar\phi_j^m$ are in the space $\bH^1$ (c.f Definition  \ref{def1.1}) . Using the first part of Lemma \ref{lemma2.1} 
  we obtain a subsequence (denoted by the index $m$ ) along which $\bar\rho_m:= \sum_{j=1}^k \beta_j |\bar\phi_j^m|^2$ converges in $\Ll^p(\R^n)$ for any $ p < n/(n-2)\equiv 2^*/2$, while $\bar V_m=I^{B^n_R}_\alpha(\bar\rho_m)$. Since $n/2\leq  n/(n-2)$ for $n=3,4$, Lemma \ref{lemapqrietz} implies the convergence of $\bar V_m$ to $\bar V$ in $\Ll^q(\R^n)$ for any $1\leq q<\infty$.  By lower semi continuity we obtain that $\bar V\in \dot\bH_{\alpha/2}(\R^n)$ and $\left<\bar V, \bar V\right>_{\alpha/2}\leq \lim_{m\rightarrow\infty}\left<\bar V_m,\bar V_m\right>_{\alpha/2}$.
  
    In addition, Lemma \ref{lemma3.2} implies  that $G_{\beta,a}(\bar V_m)$ converges to $G_{\beta,a}(\bar V)$. 
This implies 
$$ {\cal H}^{W,\alpha}_{\beta,a}(\bar V) \leq \inf_{V\in \dot\bH_{\alpha/2}(\R^n)} {\cal H}^{W,\alpha}_{\beta,a}(V) $$
so $\bar V\in \dot\bH_{\alpha/2}(\R^n)$  is, indeed, a minimizer.  The proof of Theorem \ref{maintheorem} follows now from Lemma \ref{lema1.3} and Corollary \ref{cor2.3}.

\section{Further remarks}\label{FR}
It is interesting to consider the dependence of the solution to the Choquard system  on the probability vector $\vec{\beta}$. In  particular, the relation between the critical interaction strength $a(\beta)$ at dimension $n-4$ and  the universal critical value $\bar{a}_c$  corresponding to the scalar case $k=1$ (see (\ref{barphi})). 
\begin{description}
	\item[a)] {\em Estimate on $\bar a_c$}: In \cite{conmin} the critical value in case $\alpha=n-2$ is implicitly given as the $\Ll^2$ norm of the solution of equation (\ref{barphi}). However,  these solutions are not known explicitly. Here we introduce an estimate based on Hardy inequality
	$$ \int_{\R^n}|\nabla f|^2 \geq \left(\frac{n-2}{2}\right)^2\int_{\R^n} \frac{|f|^2}{|x|^2} \ $$
	for any $f\in C_0^\infty(\R^n)$. In particular it implies that the operator $-\Delta -V$ is non-negative in $\R^n$ for any $V\leq \left(\frac{n-2}{2}\right) ^2|x|^{-2}$. 
	
	As discussed in section \ref{backgr}, the functional $E^W_a$ is bounded from below on the unit ball of $\Ll^2$ iff $a\leq \bar a_c$. This implies, in particular, that if $A>\bar a_c$ there exists $\tilde\phi\in \bH^1$ for which 
		\be\label{E0} E^0_a(\tilde\phi):=\frac{1}{2} \int_{\R^n} |\nabla \tilde\phi|^2  - \frac{A}{4} \int_{\R^n} \left(I_{n-2}*|\tilde\phi|^2\right) |\tilde\phi|^2  <0 \ .  \ee
	Moreover, by Riesz's rearrangement theorem we can assume that this $\tilde\phi$ is radially symmetric. 
	
		In particular, for any $V\geq I_{n-2}*|\tilde\phi|^2$  
		\be\label{hiphi}-\Delta-( A/2) V\not\geq  0 \ . \ee 
		%for any $\phi$ satisfying $\||\phi|^2\|_1=1$. 
		
		In the special case $n=4$, $I_2=(-\Delta)^{-1}$ is the fundamental solution  of the Laplacian.  Let $\rho:=|\tilde\phi|^2$ be this radial function. Then $U:= I_2*\rho$ is a solution of $\Delta U + \rho=0$. Thus
		\be\label{uar} r^{-3}\left( r^3 U^{'}\right)^{'} + \rho(r)=0 \ . \ee
		Let $m(r)=2\pi^2\int_0^r s^3\rho(s)ds$. In particular, $m(\cdot)$ is non-decreasing on $\R_+$, $m(0)=0$, and $m(r)\leq 1$ by assumption. Integrating (\ref{uar}) we get
		$$ r^3U^{'}(r)= -(2\pi^2)^{-1} m(r)  \Longrightarrow
		U(r)= (2\pi)^2\int_r^\infty \frac{m(s)}{s^3}ds\leq 2\pi^2 r^{-2} . $$ Thus, taking $V=2\pi^2r^{-2}$ in (\ref{hiphi}) we obtain a violation of the  Hardy inequality if $\pi^2 A$ is below the Hardy constant.   Since the Hardy constant $\left(\frac{n-2}{2}\right)^2=1$ for $n=4$ we get $A> \pi^{-2}$ for any $A>\bar a_c$, that is 
		$$ \bar a_c \geq \frac{1}{\pi^2}$$
		if $n=4$.

		It is not clear, at this point, if the above estimate holds for general dimension, since $I_{n-2}=(-\Delta)^{-1}$ only if $n=4$. There is, indeed, an estimate of the form
		$$ |x; I_\alpha* \rho(x)|>t| \leq c\left( \frac{c}{t}\|\rho\|_2\right)^{n/(n-\alpha)} $$
		(c.f \cite{Rietz}, eq. (2.12)) which, if $\rho$ is radial, is equivalent to 
		$$I_\alpha*\rho(r)\leq c^{\alpha}\omega_n^{(n-\alpha)/n}\|\rho\|_1 r^{\alpha-n}\  $$ 
		where $\omega_n$ is the surface area of the unit sphere $\mathbb{S}^{n-1}$. This suggests a similar estimate for $\bar a_c$ in for general $n$ and $\alpha =n-2$ using Hardy inequality. However, there is now known estimate (as far as we know) for the constant $c$. 
			\item[b)] {\em Relation between $\bar a_c$ and $a_\beta$}: 
			The inequality 
			$a(\beta)\geq \bar a_c$
			can be easily obtained for the critical case for any $\alpha=n-2$, $n\geq 3$, and any $\vec{\beta}$ satisfying (\ref{normalbeta}). Indeed, using Definition \ref{def1.1} and the polar  inequality
			$$ \left<|\phi_j|^2, I_\alpha* |\phi_i|^2\right>\leq \frac{1}{2}\left[ \left<|\phi_j|^2, I_\alpha* |\phi_j|^2\right>
			+ \left<|\phi_i|^2, I_\alpha* |\phi_i|^2\right>\right]$$
			we obtain 
				$${\cal E}^{(\alpha)}_{\beta,a}(\vec\phi)\geq \frac{1}{2} \sum_{j=1}^k\beta_j\left[  \left<\left< \phi_j,\phi_j\right>\right>_W -\frac{a }{2}
			\left<|\phi_j|^2, I_\alpha* |\phi_j|^2\right>\right] = \sum_{j=1}^k \beta_j E^W_a(\phi_j) $$
			where $E^W_a$ as defined in (\ref{EWdef}). It follows that ${\cal E}^{(\alpha)}_{\beta,a}$ is bounded on $\oplus^k\bH^1$ if $E^W_a$ is bounded on $\bH^1$. Since $E^W_a$ is bounded from below iff $a\leq \bar a_c$ (\cite{conmin}), the inequality $a(\beta)\geq \bar a_c$ follows. 
			
			In the case $n=3, \alpha=1$ and $n=4, \alpha=2$ we can say more about $a_c(\beta)$. By definition, $a>a_c(\beta)$ iff ${\cal H}^{W,\alpha}_{\beta,a}$ is unbounded from below on $\dot\bH_{\alpha/2}$. Using  (\ref{VVbeta}) we obtain that 
			 $a_c(\beta)>O(
			 |\beta|_{\infty}^{-1})$     
			 for $n=3$ and 
		 $a_c(\beta)> O(|\beta|^{-1}_2)$ for  $n=4$. 
			
			For an interesting conclusion from the above estimate, let $\vec{\beta}$ be the uniform vector $\vec{\beta}={\bf 1}_k:= k^{-1}(1, \ldots 1)\in \R^k$. Then $|\vec{\beta}|_2=k^{-1/2}$ (resp. $|\vec{\beta}|_\infty=k^{-1}$) so
			$$n=4 \Rightarrow a_c({\bf 1}_k)\geq O(k^{1/2})    \ \ \ \text{resp.} \ n=3 \Rightarrow\ ( a_c({\bf 1}_k)\geq O(k)  $$ for large $k$. 
		\item{[c]} 
	The  following alternative definitions of $I_\alpha$ and $(-\Delta)^{\alpha/2}$ is known \cite{Rietz, Ten}:
	$$ I_\alpha =\frac{1}{\Gamma(\alpha)} \int_0^\infty t^{\alpha/2-1} e^{t\Delta} dt \ \ \ \ ;  \ \ \  (-\Delta)^{\alpha/2}= \frac{1}{\Gamma(-\alpha)} \int_0^\infty t^{-\alpha/2-1} \left(e^{t\Delta} -I\right)dt$$
	where $e^{t\Delta}$ is the heat kernel on $\R^n$:
	$$ e^{t\Delta}= (4\pi t)^{-n/2} e^{-\frac{|x|^2}{4t}} \ . 
	$$
	We may, at least formally, substitute the kernel $e^{\Delta_\Omega t}$  of  the killing, Dirichlet problem  for the heat flow in a domain $\Omega\subset \R^n$ in the above expression, and obtain (again, at least formally...) an explicit expressions for $I_\alpha^\Omega$ and $(-\Delta_\Omega)^{\alpha/2}$, introduced implicitly in 
		(\ref{defIomega}). Such a representation can provide some insight on the trace of $I^\Omega_\alpha$ for $\alpha<2$. 
	
	\end{description}

\end{document}